\documentclass[10pt, conference, compsocconf]{IEEEtran}

%
\usepackage{cite}

%
\ifCLASSINFOpdf
\else
\fi

%
\usepackage[cmex10]{amsmath}
\usepackage{amssymb, amsbsy, amsfonts}
\usepackage{amsthm}
\usepackage{latexsym,float,color}

\theoremstyle{plain}
\newtheorem{theorem}{Theorem}
\newtheorem{lemma}{Lemma}

\newtheorem{proposition}{Proposition}
\theoremstyle{definition}
\newtheorem{definition}{Definition}
\newtheorem{example}{Example}

\newcommand{\SigmaP}{\texttt{Sigma}}
\newcommand{\fct}[3]{#1\colon #2 \to #3}
\newcommand{\dfield}[2]{({#1},{#2})}
\newcommand{\const}[2]{{\rm const}_{#2}{#1}}
\newcommand{\piE}{$\Pi$}
\newcommand{\pisiSE}{$\Pi\Sigma^*$}
\newcommand{\sigmaSE}{$\Sigma^*$}
\newcommand{\rE}{$R^b$}

\newcommand{\vect}[1]{{\bf#1}}
\newcommand{\coeff}{\text{coeff}}

\renewcommand{\AA}{\mathbb{A}}
\newcommand{\NN}{\mathbb{N}}
\newcommand{\ZZ}{\mathbb{Z}}
\newcommand{\QQ}{\mathbb{Q}}
\newcommand{\KK}{\mathbb{K}}
\newcommand{\FF}{\mathbb{F}}
\newcommand{\EE}{\mathbb{E}}
\newcommand{\GG}{\mathbb{G}}
\newcommand{\HH}{\mathbb{H}}

\newcommand{\lr}[1]{\langle #1\rangle}

\makeatletter
\def\@problemhead#1#2#3{%
 \par\kern-\parskip\kern#1
 \vbox\bgroup
 \hbox to\hsize{\hrulefill\raise1pt\hbox{\fbox{$\mathstrut$ #3\ }}\hrulefill}%
 \kern#2\par\kern-\parskip
}
\def\@problemtail#1#2{%
 \par\kern-\parskip\kern#1
 \hbox to\hsize{\hrulefill}
 \egroup
 \kern#2\par\kern-\parskip
}
\newenvironment{ProblemSpec}[1]{\@problemhead{8pt}{2pt}{#1}\small}{\@problemtail{-2pt}{4pt}}
\makeatother

\begin{document}
%
\title{A streamlined difference ring theory: Indefinite nested sums, the alternating sign and the parameterized telescoping problem}


\author{\IEEEauthorblockN{Carsten Schneider}
\IEEEauthorblockA{Research Institute for Symbolic Computation (RISC)\\
Johannes Kepler University (JKU)\\
Linz, Austria\\
Carsten.Schneider@risc.jku.at}}


%


\maketitle

\begin{abstract}
We present an algebraic framework to represent indefinite nested sums over hypergeometric expressions in difference rings. In order to accomplish this task, parts of Karr's difference field theory have been extended to a ring theory in which also the alternating sign can be expressed.
The underlying machinery relies on algorithms that compute all solutions of a given parameterized telescoping equation. As a consequence, we can solve the telescoping and creative telescoping problem in such difference rings.
\end{abstract}

\begin{IEEEkeywords}
symbolic summation; telescoping; creative telescoping; roots of unity; d'Alembertian expressions;
\end{IEEEkeywords}

\IEEEpeerreviewmaketitle

\section{Introduction}

The general paradigm of indefinite summation can be specified as follows. Given an expression $F(k)$, find an expression $G(k)$ such that the telescoping equation
\begin{equation}\label{Equ:TeleExpr}
G(k+1)-G(k)=F(k)
\end{equation}
holds. Then with some mild extra conditions on the summation range we can conclude that

\vspace*{-0.3cm}

\begin{equation}\label{Equ:ClosedForm}
G(b+1)-G(a)=\sum_{k=a}^bF(k).
\end{equation}
\noindent More generally, there is the parameterized telescoping problem: given expressions $F_1(k),\dots,F_d(k)$, find an expression $G(k)$ and constants $c_1,\dots,c_d$, free of $k$ and not all zero, such that the parameterized telescoping equation
\begin{equation}\label{Equ:PTeleExpr}
G(k+1)-G(k)=c_1\,F_1(k)+\dots+c_d\,F_d(k)
\end{equation}
holds. Then again by the the telescoping trick we obtain
\begin{equation}\label{Equ:PClosedForm}
G(b+1)-G(a)=c_1\,\sum_{k=a}^bF_1(k)+\dots+c_d\,\sum_{k=a}^bF_d(k).
\end{equation}
As discovered in~\cite{Zeilberger:91} (exploiting Gosper's algorithm for hypergeometric expressions) this paradigm can be utilized to obtain a recurrence for a given sum $S(n)=\sum_{k=a}^bF(n,k)$ depending on an extra discrete parameter $n$. Namely, with the ``creative'' Ansatz $F_i(k)=F(n+i-1,k)$ a solution $c_1,\dots,c_d,G(k)$ for problem~\eqref{Equ:PTeleExpr}
yields the recurrence 

\vspace*{-0.5cm}

\begin{equation}\label{Equ:Rec}
G(b+1)-G(a)=c_1\sum_{k=a}^b\!F(n,k)+\dots+c_d\sum_{k=a}^b\!F(n+d-1,k).
\end{equation}

Note that the expression $G(k)=\sum_{i=a}^{k-1} F(i)$ is trivially a solution of~\eqref{Equ:TeleExpr} which does not deliver any simplification in~\eqref{Equ:ClosedForm}, i.e., both sides of~\eqref{Equ:ClosedForm} are equal. In the same way, one obtains a trivial solution of~\eqref{Equ:PTeleExpr} resp.\ of~\eqref{Equ:PClosedForm}.
In order to hunt for an interesting solution, the tactic ``summation in finite terms'' is often utilized. Here one restricts to a certain ring/field $\AA$ in which $F(k)$ can be represented and develops an algorithm that decides constructively if there exists a solution $G(k)$ of~\eqref{Equ:TeleExpr} that can be represented in $\AA$. 
In this regard, Karr's summation algorithm~\cite{Karr:81,Karr:85} in the setting of difference fields is extremely flexible. Here a
$\Pi\Sigma$-field $\dfield{\AA}{\sigma}$ is introduced, i.e., a field $\AA$ 
is equipped with a field automorphism $\fct{\sigma}{\AA}{\AA}$. There  
the expressions in terms of indefinite nested sums and products are represented in $\AA$, and the shift behaviour of the objects is modelled by $\sigma$. More precisely, if $f\in\AA$ represents the expression $F(k)$, then $\sigma(f)$ represents $F(k+1)$. In this algebraic setting the telescoping problem~\eqref{Equ:TeleExpr} is rephrased as follows: given $\dfield{\AA}{\sigma}$ with $f\in\AA$, find, if possible, a $g\in\AA$ such that $\sigma(g)-g=f$
holds. Similarly, the parameterized telescoping telescoping problem can formulated as follows. Given $f_1,\dots,f_d\in\AA$, find $g\in\AA$ and constants\footnote{I.e.,$\sigma(c_i)=c_i$ for $1\leq i\leq d$.} $c_1,\dots,c_d$, not all $0$, with
\begin{equation}\label{Equ:PT}
 \sigma(g)-\,g=c_1\,f_1+\dots+c_d\,f_d.
\end{equation}


In Karr's algorithm and all the improved variations (see~\cite{Schneider:13a,Schneider:14} and the literature therein) there is one fundamental shortcoming. The alternating sign $(-1)^k$, an important building block in summation formulas, cannot be treated in a difference field: here we are faced with zero divisors such as
$((-1)^k+1)((-1)^k-1)=0$
which can be only treated in rings which are \textit{not} integral domains.
One possibility to overcome this situation is to introduce the concept of interlacing of sequences, resp., of expressions~\cite{Singer:97,Singer:08}. Another (and maybe more natural) approach is to introduce the object $(-1)^k$ as a new summation object and to treat sums and products that are defined over such objects. In~\cite{Schneider:14b} Karr's difference field theory has been generalized to a new difference ring theory that enables one to represent algorithmically indefinite nested sums and products over objects such as $(-1)^{\binom{k+1}{i}}$ with $i\geq0$. However, if one restricts to $(-1)^k$ and slight variations of it, a rather straightforward difference ring theory can be imposed on the already elaborated difference field theory. 

In this article, we will work out these concepts, streamlining the ideas of~\cite{Schneider:05c,Schneider:14b}.
Based on this, we can represent a big class of indefinite nested sums over hypergeometric expressions --also called d'Alembertian expressions~\cite{Abramov:94}, a subclass of Liouvillian expressions~\cite{Singer:99}-- by constructing a tower of difference ring extensions without extending the set of constants.
In particular, we will derive a simplified algorithm for (parameterized) telescoping for such difference rings. Using this toolbox, we will discover as illustrative examples the right hand sides of the following identities: 
\begin{align}
\sum_{k=1}^b & (-1)^k\binom{n}{k}^{-1} 
\sum_{i=0}^{-1+k} \binom{n}{i}=\frac{(-1)^b (b+1)}{(n+2) \binom{n}{b}}\sum_{i=0}^b \binom{n}{i}\nonumber\\[-0.1cm]
&\hspace*{2cm}+\frac{(-1)^b (-2 b-3)}{4 (n+2)}
-\frac{1}{4 (n+2)},\label{Equ:FirstSum}\\
\sum_{k=0}^n& \binom{n}{k} 
\sum_{i=1}^k \frac{(-1)^i}{i}=-2^n 
\sum_{k=1}^n \frac{1}{2^{k}k}.\label{Equ:SecondSum}
\end{align}
The presented algorithms are implemented within the summation package \SigmaP~\cite{Schneider:07a} and are crucial to carry out, e.g., challenging calculations in particle physics; for recent results see~\cite{Physics4} and references therein.

The outline of this article is as follows. In Section~\ref{Sec:PT} we will specify the parameterized telescoping problem. In Section~\ref{Sec:DF} we will present the underlying ideas of the difference field approach. In Sections~\ref{Sec:RExt} and~\ref{Sec:SigmaE} the additional concepts in the setting of difference rings are elaborated and it is worked out how hypergeometric products and indefinite nested sums over such products can be represented in difference rings. Finally, a parameterized telescoping algorithm for the introduced class of difference rings is presented in Section~\ref{Sec:PTAlg}.

\section{The underlying problems PT and FPLDE}\label{Sec:PT}

As motivated in the introduction, we are interested in solving the parameterized telescoping problem in a difference ring (resp. field) $\dfield{\AA}{\sigma}$, i.e., in a ring\footnote{Subsequently, all rings are commutative and all rings (resp.\ fields) contain the rational numbers $\QQ$ as subring (resp.\ subfield). We write $A^*=A\setminus\{0\}$ for a set $A$. $\ZZ$ and $\NN$ denote the set of integers and non-negative integers, respectively.} (resp.\ field) $\AA$ equipped with a ring (resp.\ field) automorphism $\fct{\sigma}{\AA}{\AA}$. Here we define the set of constants by
$\const{\AA}{\sigma}=\{c\in\AA\,|\,\sigma(c)=c\}.$
It is easy to see that $\const{\AA}{\sigma}$ is a subring of $\AA$. 
If we impose that $\const{\AA}{\sigma}$ is a field, we also say that $\const{\AA}{\sigma}$ is the constant field of $\dfield{\AA}{\sigma}$. Note that $\const{\AA}{\sigma}$ is automatically a subfield of $\AA$, if $\AA$ is a field. 

For a difference ring $\dfield{\AA}{\sigma}$ with $\vect{f}=(f_1,\dots,f_d)\in\AA^d$ and $W\subseteq \AA$ we define the solution set
$$V(\vect{f},W)=\{(c_1,\dots,c_d,g)\in(\const{\AA}{\sigma})^d\times W|\, \eqref{Equ:PT}\text{ holds}\}.$$
If $\KK=\const{\AA}{\sigma}$ is a field (and not just a ring), it is easy to verify that $V(\vect{f},\AA)$ is a finite vector space over $\KK$. More generally, if $W$ is a $\KK$-subspace of $\AA$, $V(\vect{f},W)$ is a subspace of $\KK^d\times\AA$ over $\KK$ with dimension $\leq d+1$.  

In summary, to find all solutions of the parameterized telescoping problem can be specified as follows.

\begin{ProblemSpec}{Problem PT in $\dfield{\GG}{\sigma}$ {\footnotesize(Parameterized Telescoping)}}
\noindent \textit{Given} a difference ring (resp.\ field) $\dfield{\GG}{\sigma}$ where $\KK=\const{\GG}{\sigma}$ is a field and given $\vect{f}\in\GG^d$.\\
\noindent \textit{Find} a basis of $V(\vect{f},\GG)$.
\end{ProblemSpec}

In order to solve Problem~PT in the difference field approach~\cite{Karr:81} (i.e., in \pisiSE-fields defined below), in particular, in the difference ring approach (see below), the problem is reduced to a smaller field (resp.\ ring). However, Problem PT cannot always be reduced again to subproblems of type PT. In general it will be reduced to a more general problem that we have to tackle in a difference field $\dfield{\FF}{\sigma}$ with $\KK=\const{\FF}{\sigma}$. Namely, for $a\in\FF^*$ and $\vect{f}=(f_1,\dots,f_d)\in\FF^d$ we define the solution set
$$V(a,\vect{f},\FF)=\{(c_1,\dots,c_d,g)\in(\const{\FF}{\sigma})^d\times\FF|$$ 
\begin{equation}\label{Equ:FPLDE}
 \sigma(g)-a\,g=c_1\,f_1+\dots+c_d\,f_d\}.
\end{equation}
As for Problem PT one can easily check that $V(a,\vect{f},\FF)$ is a subspace of $\KK^d\times\FF$ over $\KK$ with dimension $\leq d+1$. Summarizing, we are interested in the following problem.

\begin{ProblemSpec}{Problem FPLDE in $\dfield{\GG}{\sigma}$ \begin{minipage}[t]{3.2cm}{\footnotesize(First-order Parameterized\\[-0.13cm] Linear Difference Equ.)}\end{minipage}} 
\noindent \textit{Given} a difference field $\dfield{\GG}{\sigma}$ with $a\in\GG^*$ and $\vect{f}\in\GG^d$.\\
\textit{Find} a basis of $V(a,\vect{f},\GG)$.
\end{ProblemSpec}

\section{The difference field approach}\label{Sec:DF}

We aim at solving a parameterized telescoping equation~\eqref{Equ:PTeleExpr}
in terms of a difference field (resp. ring). Here we are faced with three subproblems. 
\begin{enumerate}
\item Construct a difference field (resp.\ ring) in which the summation objects are modelled accordingly. 
\item Solve Problem PT in this setting. 
\item Rephrase the solution in terms of sums and products.
\end{enumerate}
We will illustrate this procedure in the setting of difference fields by discovering the identity
\begin{equation}\label{Equ:SimpleBinom}
\text{\small$\displaystyle\sum_{k=1}^b$} \Big(
        \text{\small$\displaystyle\sum_{i=0}^{k-1}$}\tbinom{n}{i}\Big)\\
=\frac{1}{2} (2 b-n) 
\text{\small$\displaystyle\sum_{i=0}^b$}\tbinom{n}{i}
+\frac{1}{2} \tbinom{n}{b} (-b
+n).
\end{equation}
Concerning Subproblem 1, a difference field $\dfield{\FF}{\sigma}$ is constructed by adjoining step by step new variables that describe the arising summation objects, and the automorphism, acting on the variables, is extended accordingly in order to model the shift behaviour of the summation objects. Here we exploit the following basic lemma.

\begin{lemma}
Let $\dfield{\FF}{\sigma}$ be a difference field with $\alpha\in\FF^*$, $\beta\in\FF$ and let $t$ be transcendental over $\FF$, i.e., $\FF(t)$ is a rational function field. Then there is a unique field automorphism $\fct{\sigma'}{\FF(t)}{\FF(t)}$ with $\sigma'|_{\FF}=\sigma$ and $\sigma'(t)=\alpha\,t+\beta$.
\end{lemma}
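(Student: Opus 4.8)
The plan is to first extend $\sigma$ to the polynomial ring $\FF[t]$, then pass to the quotient field $\FF(t)$, and finally argue that the resulting endomorphism is actually bijective. Since $t$ is transcendental over $\FF$, the ring $\FF[t]$ is the free commutative $\FF$-algebra on one generator, so by its universal property the assignment $\sigma'\big(\sum_i a_i\,t^i\big)=\sum_i \sigma(a_i)\,(\alpha\,t+\beta)^i$ (with $a_i\in\FF$) defines a ring endomorphism $\fct{\sigma'}{\FF[t]}{\FF[t]}$ with $\sigma'|_{\FF}=\sigma$ and $\sigma'(t)=\alpha\,t+\beta$. That this is a well-defined ring homomorphism uses only that $\sigma$ is a ring homomorphism on $\FF$ and that $\alpha\,t+\beta\in\FF[t]$.

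Next I would show that $\sigma'$ is an automorphism of $\FF[t]$. For injectivity I would observe that $\sigma'$ preserves degrees: if $p=\sum_{i=0}^{n}a_i\,t^i$ with $a_n\neq0$, then the coefficient of $t^n$ in $\sigma'(p)$ equals $\sigma(a_n)\,\alpha^n$, which is nonzero because $\sigma$ is injective (so $\sigma(a_n)\neq0$) and $\alpha\in\FF^*$. Hence $\sigma'$ sends nonzero polynomials to nonzero polynomials. For surjectivity the decisive point --- and the only place where one genuinely needs $\sigma$ to be an automorphism rather than merely an injective endomorphism --- is that $t$ lies in the image: setting $c=\sigma^{-1}(\alpha^{-1})$ and $d=-\sigma^{-1}(\alpha^{-1}\beta)$, a direct computation gives $\sigma'(c\,t+d)=\sigma(c)(\alpha\,t+\beta)+\sigma(d)=t$. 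Combined with $\sigma(\FF)=\FF$, this shows every element of $\FF[t]$ is hit, so $\sigma'$ is bijective on $\FF[t]$.

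Finally, any automorphism of the integral domain $\FF[t]$ extends uniquely to an automorphism of its field of fractions $\FF(t)$ via the rule $\sigma'(p/q)=\sigma'(p)/\sigma'(q)$, which is well-defined precisely because injectivity guarantees $\sigma'(q)\neq0$ for $q\neq0$. This yields the desired field automorphism with the prescribed restriction to $\FF$ and value on $t$. Uniqueness of $\sigma'$ in the full statement is then immediate: $\FF(t)$ is generated as a field by $\FF\cup\{t\}$, so any field homomorphism satisfying $\sigma'|_{\FF}=\sigma$ and $\sigma'(t)=\alpha\,t+\beta$ is forced on polynomials by additivity and multiplicativity and on quotients by the homomorphism property. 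I expect the surjectivity step to be the main obstacle, since it is where invertibility of $\sigma$ and the condition $\alpha\neq0$ are both essential; the remaining steps are routine verifications.
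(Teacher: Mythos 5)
Your proof is correct; note that the paper itself states this lemma without proof, treating it as a standard fact about difference field extensions. Your argument---extending $\sigma$ to $\FF[t]$ via the universal property of the polynomial ring, verifying bijectivity there (with the explicit preimage $\sigma'(c\,t+d)=t$ for $c=\sigma^{-1}(\alpha^{-1})$, $d=-\sigma^{-1}(\alpha^{-1}\beta)$, which indeed is where $\alpha\in\FF^*$ and the surjectivity of $\sigma$ are both used), and then passing to the field of fractions---is exactly the canonical argument the paper leaves to the reader, and all steps, including the uniqueness claim, are sound.
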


\begin{example}\label{Exp:PiSi1}
We will represent the summand $F(k)=\sum_{i=0}^{-1+k} \binom{n}{i}$ given in~\eqref{Equ:SimpleBinom} in a difference field.\\
(0) Take the rational function field $\QQ(n)$ and the automorphism $\fct{\sigma}{\QQ(n)}{\QQ(n)}$ with $\sigma(c)=c$ for all $c\in\QQ(n)$.\\
(1) Take the rational function field $\QQ(n)(k)$ and extend $\sigma$ from $\QQ(n)$ to $\QQ(n)(k)$ with $\sigma(k)=k+1$.\\
(2) Take the rational function field $\QQ(n)(k)(b)$ and represent $\binom{n}{k}$ by $b$. Since $\binom{n}{k+1}=\frac{n-k}{k+1}\binom{n}{k}$, we extend $\sigma$ from $\QQ(n)(k)$ to $\QQ(n)(k)(b)$ with $\sigma(b)=\frac{n-k}{k+1}\,b.$\\
(3) Take the function field $\QQ(n)(k)(b)(s)$ and represent the sum $F(k)$ by $s$. Since $F(k+1)=F(k)+\binom{n}{k}$, we extend $\sigma$ from $\QQ(n)(k)(b)$ to $\QQ(n)(k)(b)(s)$ with $\sigma(s)=s+b.$ 
\end{example}

In a nutshell, we constructed a tower of difference field extensions. Here,
a difference field $\dfield{\EE}{\sigma'}$ is called a difference field extension of $\dfield{\FF}{\sigma}$ (in short, $\dfield{\FF}{\sigma}\leq\dfield{\EE}{\sigma'}$) if $\EE$ is a field extension of $\FF$ and $\sigma'|_{\FF}=\sigma$; subsequently, we do not distinguish anymore between $\sigma$ and $\sigma'$. E.g., in our example we built the following tower of extensions:
\small
\begin{align*}
\dfield{\QQ(n)}{\sigma}&\leq\dfield{\QQ(n)(k)}{\sigma}\\
&\leq\dfield{\QQ(n)(k)(b)}{\sigma}\leq\dfield{\QQ(n)(k)(b)(s)}{\sigma}.
\end{align*}
\normalsize
In addition, we have the property that during the extensions the constants remain unchanged. Namely for the constructed field $\FF=\QQ(n)(k)(b)(s)$ we have that
$\const{\FF}{\sigma}=\QQ(n)$.
Exactly this type of extensions is called \pisiSE-extension~\cite{Karr:81}.

\begin{definition}\label{Def:PiSi}
Consider the difference field extension $\dfield{\FF(t)}{\sigma}$ of $\dfield{\FF}{\sigma}$ with $t$ transcendental over $\FF$, $\sigma(t)=\alpha\,t+\beta$ where $\alpha\in\FF^*$ and $\beta\in\FF$, and $\const{\FF(t)}{\sigma}=\const{\FF}{\sigma}$. 
\begin{enumerate}
\item This extension is called a \piE-extension if $\beta=0$. 
\item It is called a \sigmaSE-extension if $\alpha=1$.
\item It is a \pisiSE-extension if it is a \piE- or \sigmaSE-extension.
\end{enumerate}
Finally, a tower of \piE-/\sigmaSE-/\pisiSE-extensions is called a (nested) \piE-/\sigmaSE-/\pisiSE-extension. A difference field $\dfield{\FF}{\sigma}$ with constant field $\KK$ is called \pisiSE-field over $\KK$ if $\dfield{\FF}{\sigma}$ is a (nested) \pisiSE-extension of $\dfield{\KK}{\sigma}$.
\end{definition}

\begin{example}[Cont.\ Ex.\ \ref{Exp:PiSi1}]
Take our \pisiSE-field $\dfield{\FF}{\sigma}$ over $\QQ(n)$.
The summand $F(k)$ in~\eqref{Equ:SimpleBinom} can be represented by $f=s\in\FF$. With the summation package \SigmaP\ we get the solution 
$g=\frac{b\,k}{2}
+\frac{1}{2} s (-2
+2 k
-n
)\in\FF$
of $\sigma(g)-g=f$. 
This gives the solution 
$G(k)=\frac{1}{2} (-2
+2 k
-n
) 
\sum_{i=0}^{-1+k} \binom{n}{i}
+\frac{k}{2}\binom{n}{k}$
for~\eqref{Equ:TeleExpr}, and by the telescoping trick we arrive at~\eqref{Equ:SimpleBinom}.
\end{example}

\textit{Remark on Subproblem 1.} It is not obvious why the constructed difference field $\dfield{\FF}{\sigma}$ is a \pisiSE-field and how we can design such a difference field automatically in which the summand $F(k)$ can be represented. In~\cite{Karr:85} Karr derived tools that enable one to check if a tower of extensions built by variables is indeed a \pisiSE-field. In addition, in~\cite{Schneider:10c,Schneider:13a} it has been elaborated how these tools can be used to perform such constructions almost automatically. However, only with the possibility to treat objects like $(-1)^k$ algorithmically, this approach turns out be fully algorithmic. Further details on these aspects will be given in  Section~\ref{Sec:RExt}.

\textit{Remark on Subproblem 2.} In general, given a \pisiSE-field $\dfield{\FF}{\sigma}$ over a constant field with certain algorithmic properties, Karr's summation algorithm~\cite{Karr:81} solves Problem~PT in $\dfield{\FF}{\sigma}$; for a simplified and improved version implemented in \SigmaP\ we refer to~\cite{Schneider:14}. More generally, as already indicated in Section~\ref{Sec:DF}, these algorithms rely on a solution of Problem FPLDE. 
In the \pisiSE-field setting, due to~\cite{Karr:81} and~\cite[Thm.~3.5]{Schneider:05c} we obtain the following result.

\begin{theorem}\label{Thm:FPLDEinPiSi}
Let $\dfield{\FF}{\sigma}$ be a \pisiSE-field where the constant field is given by a rational function field $\GG(y_1,\dots,y_o)$ defined over an algebraic number field $\GG$. Then there is an algorithm that solves problem FPLDE in $\dfield{\FF}{\sigma}$.  
\end{theorem}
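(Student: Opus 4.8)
The plan is to proceed by induction on the number of generators in the \pisiSE-tower. Since $\dfield{\FF}{\sigma}$ is a \pisiSE-field over $\KK=\GG(y_1,\dots,y_o)$, by definition there is a chain
$$\dfield{\KK}{\sigma}=\dfield{\FF_0}{\sigma}\leq\dfield{\FF_1}{\sigma}\leq\dots\leq\dfield{\FF_n}{\sigma}=\dfield{\FF}{\sigma}$$
in which each step $\FF_{i}=\FF_{i-1}(t_i)$ is a \piE- or \sigmaSE-extension. I would prove the stronger statement that Problem~FPLDE is algorithmically solvable in every $\dfield{\FF_i}{\sigma}$, the case $i=n$ being the claim. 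Recall that $V(a,\vect{f},\FF_i)$ is a $\KK$-vector space of dimension at most $d+1$; hence ``solving'' FPLDE means computing a finite basis, and the whole induction only ever manipulates finite-dimensional objects over $\KK$.

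For the base case $\FF_0=\KK$ the automorphism is the identity, so $\sigma(g)=g$ and~\eqref{Equ:FPLDE} collapses to the purely algebraic equation $(1-a)\,g=c_1f_1+\dots+c_df_d$ with $g\in\KK$ and $c_i\in\KK$. If $a\neq1$ this determines $g$ uniquely from any choice of the $c_i$, so $V(a,\vect{f},\KK)$ is the graph of a $\KK$-linear map and a basis is read off immediately. If $a=1$ the equation splits into the homogeneous linear relations $\sum_i c_if_i=0$ (found by linear algebra over $\KK$) together with the free constant solution $g\in\KK$. In either case the computation reduces to linear algebra over the rational function field $\KK=\GG(y_1,\dots,y_o)$, which is effective because the algebraic number field $\GG$ is computable.

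For the inductive step write $\FF_i=\EE(t)$ with $\dfield{\EE}{\sigma}=\dfield{\FF_{i-1}}{\sigma}$ and $\sigma(t)=\alpha t+\beta$, and assume FPLDE is solvable in $\dfield{\EE}{\sigma}$. The goal is to reduce the equation $\sigma(g)-a\,g=\sum_i c_if_i$ over $\EE(t)$ to finitely many FPLDE problems over $\EE$. I would carry this out in three stages. First, a \emph{denominator bound}: exhibit a computable $d\in\EE[t]$ such that every solution $g$ has the form $g=p/d$ with $p$ a polynomial (in the \sigmaSE-case) or a Laurent polynomial (in the \piE-case, where $\sigma(t)=\alpha t$ may force negative powers) in $t$ over $\EE$; substituting $g=p/d$ produces an equation of the same shape in the unknown $p$. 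Second, a \emph{degree bound}: determine a computable $D$ with $\deg_t p\leq D$ (and a symmetric lower bound in the \piE-case). Third, \emph{coefficient comparison and descent}: writing $p=\sum_{j}p_jt^j$ and expanding $\sigma(t)=\alpha t+\beta$, comparison of the top coefficient yields a first-order parameterized equation for $p_D$ over $\EE$, namely an instance of FPLDE with $a$ replaced by $a\,\alpha^{-D}$ and with a right-hand side assembled from the $f_i$. By the induction hypothesis this is solvable; substituting the solution back and descending to $p_{D-1},p_{D-2},\dots$ gives a nested sequence of FPLDE problems over $\EE$. The constants $c_1,\dots,c_d$ are carried along as formal unknowns, so each descent step only enlarges the parameter list linearly; the accumulated linear constraints on these parameters are finally resolved by linear algebra over $\KK$, yielding a basis of $V(a,\vect{f},\FF_i)$ of dimension $\leq d+1$.

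The main obstacle is the rigorous construction of the computable degree and denominator bounds, uniformly for both \piE- and \sigmaSE-extensions. These rest on the structural properties of \pisiSE-extensions --- in particular the transcendence of $t$ and the fact that, by $\const{\FF_i}{\sigma}=\const{\EE}{\sigma}$, there are no ``spurious'' periodic or telescoping solutions that could produce arbitrarily high-degree cancellations. Establishing such bounds (and showing that the auxiliary order and root-of-unity conditions they require can themselves be decided in $\EE$) is precisely the content supplied by Karr's original analysis~\cite{Karr:81} and its parameterized refinement in~\cite[Thm.~3.5]{Schneider:05c}; the remaining bookkeeping of the growing constant system, while needing care to preserve the bound $\dim_\KK V\leq d+1$, is routine linear algebra over $\KK$.
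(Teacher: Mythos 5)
Your proposal matches the paper's own treatment of this statement: the paper gives no self-contained proof of Theorem~\ref{Thm:FPLDEinPiSi}, attributing it directly to Karr's algorithm~\cite{Karr:81} and its parameterized refinement~\cite[Thm.~3.5]{Schneider:05c}, and your induction over the \pisiSE-tower with denominator bounds, degree bounds, and coefficient-wise descent to FPLDE subproblems over the smaller field is precisely the structure of those cited algorithms, with the genuinely hard technical core (the computable bounds and the auxiliary decision problems in the \piE-case) deferred to the same references. So the proposal is correct as far as the paper's standard of proof goes, and it takes essentially the same route.
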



\textit{Remark on Subproblem 3.} The reformulation of the difference field solution is carried out by reinterpreting the corresponding variables as sums and products. A rigorous translation mechanism is introduced in~\cite{Schneider:10b,Schneider:10c,Schneider:13a}.

\section{Root of unity extensions and the representation of products}\label{Sec:RExt}

In this section we show how objects like $(-1)^k$ can be represented in a difference ring. Using this construction together with \piE-extensions we will show afterwards how a finite set of hypergeometric expressions can be represented in a tower of difference ring extensions without enlarging the underlying constant field.
In this regard, a difference ring $\dfield{\EE}{\sigma'}$ is a difference ring extension of $\dfield{\AA}{\sigma}$ if $\EE$ is a ring extension of $\AA$ and $\sigma'|_{\AA}=\sigma$; as with the field version we do not distinguish anymore between $\sigma'$ and $\sigma$.

In the following, let $\dfield{\AA}{\sigma}$ be a difference ring (or field) with constant field $\KK$ and let $\alpha\in\KK^*$ be a primitive $\lambda$-th root of unity with $\lambda>1$. Note that
$\alpha^{\lambda}=1$ where $\lambda$ is minimal. A typical example is $\alpha=-1$ with $\lambda=2$ or the imaginary part $\alpha=i$ with $\lambda=4$. We will construct a difference ring extension in which the object $(\alpha)^k$ can be represented, i.e.,
where the properties $((\alpha)^k)^{\lambda}=1$ and $(\alpha)^{k+1}=\alpha\,(\alpha)^k$ are rephrased algebraically. 

First, take the difference ring extension
$\dfield{\AA[y]}{\sigma}$ of $\dfield{\AA}{\sigma}$ with $y$ being transcendental over $\AA$ and $\sigma(y)=\alpha\,y$ (again this construction is unique).
Now
take the ideal $I:=\lr{y^{\lambda}-1}$ and consider the quotient ring $\EE=\AA[y]/I$.
Since $I$ is closed under $\sigma$, i.e., $I$ is a reflexive difference ideal, one can
verify that $\fct{\sigma}{\EE}{\EE}$ with
$\sigma(f+I)=\sigma(f)+I$
forms a ring automorphism. In other words, $\dfield{\EE}{\sigma}$ is a
difference ring. Moreover, there is the natural embedding of $\AA$ into $\EE$
with
$a\mapsto a+I$.
By identifying $a$ with $a+I$, $\dfield{\EE}{\sigma}$ is a difference ring extension of $\dfield{\AA}{\sigma}$. Finally, by setting $x:=y+I$. we get the difference ring extension $\dfield{\AA[x]}{\sigma}$ of $\dfield{\AA}{\sigma}$ subject to the relation $x^{\lambda}=1$. This extension is also called algebraic extension.
Note that $x^{\lambda}=1$ and $\sigma(x)=\alpha\,x$ model exactly the object $\alpha^k$ with $((\alpha)^k)^{\lambda}=1$ and $(\alpha)^{k+1}=\alpha\,(\alpha)^k$.
To this end, we are interested in those extensions where the constants remain unchanged; see~\cite{Schneider:14b}.

\begin{definition}
Let $\dfield{\AA}{\sigma}$ be a difference field and let
$\alpha\in\const{\AA}{\sigma}$ be a primitive $\lambda$-th root of unity ($\lambda>1$). Then the algebraic extension $\dfield{\AA[x]}{\sigma}$ of $\dfield{\AA}{\sigma}$ is called basic root of unity extension (\rE-extension) if $\const{\AA[x]}{\sigma}=\const{\AA}{\sigma}$.
\end{definition}

\noindent To check if the constants remain unchanged can be non-trivial~\cite{Schneider:14b}. However, one can always construct an \rE-extension over a difference field which is constant-stable.

\begin{definition}
 A difference field/ring $\dfield{\AA}{\sigma}$ is constant-stable if for any $k>0$ we have that $\const{\AA}{\sigma^k}=\const{\AA}{\sigma}$.
\end{definition}

\noindent Namely, we get the following result.

\begin{proposition}\label{Prop:RExt}
Let $\dfield{\FF}{\sigma}$ be a constant-stable difference field. Let $\alpha\in\const{\FF}{\sigma}$ be a primitive $\lambda$-th root of unity ($\lambda>1$). The algebraic extension from above is an \rE-extension. 
\end{proposition}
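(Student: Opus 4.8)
The plan is to compute the constants of the algebraic extension $\dfield{\FF[x]}{\sigma}$ directly, using the explicit basis $1,x,\dots,x^{\lambda-1}$ of $\FF[x]$ as a free $\FF$-module (recall that $\FF[x]=\FF[y]/\lr{y^{\lambda}-1}$, so $x^{\lambda}=1$). Write $\KK=\const{\FF}{\sigma}$. Since $\const{\FF}{\sigma}\subseteq\const{\FF[x]}{\sigma}$ holds trivially, it suffices to establish the reverse inclusion. First I would take an arbitrary constant $h=\sum_{i=0}^{\lambda-1}h_i\,x^i\in\const{\FF[x]}{\sigma}$ with $h_i\in\FF$ and unfold the constancy condition $\sigma(h)=h$. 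Using $\sigma(x)=\alpha\,x$ together with $\alpha\in\KK$, one computes
\[
\sigma(h)=\sum_{i=0}^{\lambda-1}\sigma(h_i)\,\alpha^i\,x^i,
\]
and comparing coefficients with respect to the $\FF$-basis $1,x,\dots,x^{\lambda-1}$ converts the single equation $\sigma(h)=h$ into the system $\sigma(h_i)=\alpha^{-i}\,h_i$ for $0\leq i\leq\lambda-1$.

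The case $i=0$ gives $h_0\in\KK$ immediately. The heart of the argument is to show $h_i=0$ for $1\leq i\leq\lambda-1$. Here I would exploit that $\alpha$ is fixed by $\sigma$: iterating the relation $\sigma(h_i)=\alpha^{-i}\,h_i$ yields $\sigma^k(h_i)=\alpha^{-ki}\,h_i$ for all $k\geq0$, and in particular
\[
\sigma^{\lambda}(h_i)=\alpha^{-\lambda i}\,h_i=(\alpha^{\lambda})^{-i}\,h_i=h_i,
\]
since $\alpha^{\lambda}=1$. Thus $h_i\in\const{\FF}{\sigma^{\lambda}}$.

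At this point the constant-stability hypothesis enters, and I expect this to be the crux of the proof. It gives $\const{\FF}{\sigma^{\lambda}}=\const{\FF}{\sigma}=\KK$, so $h_i\in\KK$ and hence $\sigma(h_i)=h_i$. Combined with $\sigma(h_i)=\alpha^{-i}\,h_i$ this forces $(1-\alpha^{-i})\,h_i=0$ in $\FF$. Since $\alpha$ is a \emph{primitive} $\lambda$-th root of unity, we have $\alpha^i\neq1$ for $1\leq i\leq\lambda-1$, so $1-\alpha^{-i}\neq0$; as $\FF$ is a field this forces $h_i=0$. Therefore $h=h_0\in\KK$, which proves $\const{\FF[x]}{\sigma}=\KK$ and hence that the extension is an \rE-extension.

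The only nontrivial ingredient is ruling out a nonzero $h_i$ satisfying the homogeneous relation $\sigma(h_i)=\alpha^{-i}\,h_i$ for a power of a primitive root of unity, and this is precisely what constant-stability handles: without it, such an $h_i$ could in principle contribute a genuinely new constant to $\FF[x]$, and the extension would fail to be an \rE-extension. Everything else is routine coefficient bookkeeping in the free module $\FF[x]$.
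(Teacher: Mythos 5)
Your proof is correct and follows essentially the same route as the paper's: coefficient comparison in the free $\FF$-basis $1,x,\dots,x^{\lambda-1}$ gives $\sigma(h_i)=\alpha^{-i}h_i$, iterating yields $\sigma^{\lambda}(h_i)=h_i$ so that constant-stability places $h_i$ in $\KK$, and primitivity of $\alpha$ then kills the coefficients with $1\leq i\leq\lambda-1$. The only cosmetic difference is that you argue directly that every constant lies in $\KK$, whereas the paper runs the identical computation as a proof by contradiction starting from a single nonzero coefficient $g_m$ with $0<m<\lambda$.
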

\begin{proof}
Suppose there is a $g=\sum_{i=0}^{\lambda-1}g_i\,x^i\in\FF[x]\setminus\FF$ with $\sigma(g)=g$. Thus there is an $m$ with $0<m<\lambda$ with $g_m\neq0$. By comparing the $m$-th coefficient in $\sigma(g)=g$ it follows that $\sigma(g_m)=\alpha^m\,g_m$.
Since $\alpha$ is a constant, $\sigma^{\lambda}(g_m)=(\alpha^m)^{\lambda}\,g_m=g_m$. Hence $g_m\in(\const{\FF}{\sigma})^*$ since  $\dfield{\FF}{\sigma}$ is constant-stable. Consequently, $g_m=\sigma(g_m)=\alpha^m\,g_m$. Since $g_m$ is invertible, we get $\alpha^m=1$; this contradicts to the minimality of $\lambda\neq0$ for $\alpha^{\lambda}=1$.
\end{proof}

\smallskip

\noindent\textit{Summary:} Since any \pisiSE-field is constant-stable, see~\cite{Karr:81}, we can always construct an \rE-extension over a \pisiSE-field.


\subsection{Representation of hypergeometric products}

In the following we restrict to a \pisiSE-field $\dfield{\KK(k)}{\sigma}$ with $\sigma(k)=k+1$ where the constant field $\KK=\QQ(y_1,\dots,y_o)$ is a rational function field ($o\geq0$). Then we want to solve the following problem\footnote{Let $\alpha\in\KK(k)$ and let $\KK(k)$ be a subfield of $\EE$ with $u\in\EE$. In the following $\alpha(u)$ means that we replace in $\alpha$ any occurrence of $k$ by $u$.} (see subproblem 1 in Section~\ref{Sec:DF}).

\begin{ProblemSpec}{Problem RHP {\footnotesize(Represent hypergeometric products)}}
\noindent \textit{Given} the hypergeometric products over $\alpha_i\in\KK(k)^*$:

\vspace*{-0.3cm}

\begin{equation}\label{Equ:ProdSet}
P_1(k)=\prod_{j=\lambda_1}^{k}\alpha_1(j),\dots,P_n(k)=\prod_{j=\lambda_n}^{k}\alpha_n(j)
\end{equation}

\vspace*{-0.2cm}

\noindent with $\lambda_i\in\NN$ where $\alpha_i(j)$ has no pole and is non-zero for $j\geq\lambda_i$.\\
\textit{Find} a difference ring extension $\dfield{\AA}{\sigma}$ of $\dfield{\KK(k)}{\sigma}$ with $\const{\AA}{\sigma}=\KK$ such that for all $1\leq i\leq n$
there is a $g_i\in\AA^*$ with
$\sigma(g_i)=\alpha_i\,g_i.$
\end{ProblemSpec}

\noindent I.e., we want to model a finite set of products~\eqref{Equ:ProdSet} with the shift-behaviour 
$\prod_{j=\lambda_i}^{k+1}\alpha_i(j)=\sigma(\alpha_i)\,\prod_{j=\lambda_i}^{k}\alpha_i(j)$
in a ring extension without extending the constants $\KK$.

A solution to this problem has been presented in~\cite{Schneider:05c}. A streamlined version can be given as follows. 
Here we define the polynomial ring $R=\ZZ[y_1,\dots,y_o,k]$. Note that the gcd in $R$ is uniquely determined up to the units $1,-1$ of $R$.

\smallskip

\textit{Step (1).} Factorize all $\alpha_i$ into irreducible factors, i.e.,
\begin{equation}\label{Equ:FullFactor}
\alpha_i=f_{i,1}^{m_{i,1}}\dots f_{i,n_i}^{m_{i,n_i}}\text{ with } m_{i,j}\in\ZZ
\end{equation}
and irreducible $f_{i,1},\dots,f_{i,n_i}\in R$ being pairwise co-prime.

Among the $\alpha_i$ there might be several factors which are shift-equivalent. More precisely, two irreducible factors $a,b\in R$ are called shift-equivalent if there is an $r\in\ZZ$ with $\gcd(a,\sigma^r(b))\neq 1$.
Otherwise, $a$ and $b$ are called shift-prime.\\
The following observations are immediate: If both elements are free of $k$, they are shift-prime iff they are the same up to the unit $-1$. If one element is free of $k$ and the other element depends on $k$, they are shift-prime. Otherwise, it both depend on $k$, it is well known that $\gcd(\sigma^r(a),b)\neq1$ iff $r\in\ZZ$ is a root of $p(z)=res_k(a(z),b(k+z))\in\KK[z]$. Summarizing, we can decide algorithmically if two irreducible factors $a,b\in R$ are shift-prime. If not, we obtain in addition a witness $r\in\ZZ$ for $\gcd(a,\sigma^r(b))\neq 1$.
In this case, we even have 
$a=(\pm1)\sigma^r(b)$, since $a,b$ are irreducible.

\smallskip

\textit{Step (2).} With this algorithmic property we can proceed as follows.
Rewrite for $1\leq i\leq n$ the factorization~\eqref{Equ:FullFactor} to
\begin{equation}\label{Equ:CompactRep}
\alpha_i=\frac{\sigma(g_i)}{g_i}(-1)^{w_i} h_{1}^{\mu_{i,1}}\dots h_{e}^{\mu_{i,e}}
\end{equation}
with $w_i\in\{0,1\}$, $\mu_{i,j}\in\ZZ$, $g_i\in\KK(k)^*$ and with $h_1,\dots,h_e\in R$ being pairwise shift-prime; note that the $h_1,\dots,h_e$ are used simultaneously for all the $\alpha_i$.\\ 
Namely, pick out a factor $f_{i,j}$, not treated so far, and determine the factors in all the $\alpha_i$ which are not shift-prime. 
Then one can exploit the following lemma to express all these shift-equivalent factors by the representant\footnote{Alternatively, one can take any representant $\sigma^u(f_{i,j})$ with $u\in\ZZ$. This might lead to compacter expressions in~\eqref{Equ:CompactRep} which is closely related to~\cite{Petkov:10}.} $f_{i,j}$ multiplied by a factor $\sigma(\gamma)/\gamma$ for some $\gamma\in\KK(k)^*$.

\begin{lemma}\label{Lemma:MapFactors}
Let $\dfield{\KK(k)}{\sigma}$ be a \pisiSE-field over $\KK$ with $\sigma(k)=k+1$ and $p,q\in\KK[k]^*$ with $\sigma^r(p)=q$
for some $r\in\ZZ$. Then one can construct a $\gamma\in\KK(k)^*$ with
$q=p\,\frac{\sigma(\gamma)}{\gamma}$.
\end{lemma}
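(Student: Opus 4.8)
The plan is to construct $\gamma$ explicitly as a telescoping product, distinguishing the sign of $r$. The crucial preliminary observation is that since $\sigma$ fixes $\KK$ pointwise and $\sigma(k)=k+1$, the automorphism acts on $\KK[k]$ simply by the shift $\sigma^j(p)=p(k+j)$; hence the hypothesis $q=\sigma^r(p)$ just says $q=p(k+r)$. The desired identity $q=p\,\tfrac{\sigma(\gamma)}{\gamma}$ is equivalent to $\tfrac{\sigma(\gamma)}{\gamma}=\tfrac{\sigma^r(p)}{p}$, so I must find a rational function whose ratio $\sigma(\gamma)/\gamma$ equals the quotient of the $r$-th shift of $p$ by $p$ itself, which is the multiplicative analogue of a telescoping equation.

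First I would dispose of $r=0$, where $q=p$ and $\gamma=1$ works trivially. Next, for $r>0$ I would set $\gamma=\prod_{j=0}^{r-1}\sigma^j(p)$, a nonzero polynomial since $p\in\KK[k]^*$ and $\KK[k]$ is an integral domain. Applying $\sigma$ shifts the product index to give $\sigma(\gamma)=\prod_{j=1}^{r}\sigma^j(p)$, and the telescoping cancellation of the common factors $\sigma^1(p),\dots,\sigma^{r-1}(p)$ yields $\sigma(\gamma)/\gamma=\sigma^r(p)/p=q/p$, as required. Finally, for $r<0$ I would take the reciprocal product $\gamma=\big(\prod_{j=r}^{-1}\sigma^j(p)\big)^{-1}\in\KK(k)^*$ and verify by the same index shift that $\sigma(\gamma)/\gamma=\sigma^r(p)/p$; here the product runs over the negative shifts $\sigma^r(p),\dots,\sigma^{-1}(p)$, and the cancellation again leaves exactly $\sigma^r(p)$ in the numerator and $p$ in the denominator.

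There is essentially no genuine obstacle here: the argument is a direct multiplicative telescoping once the shift action on polynomials is made explicit. The only points demanding a little care are confirming that $\gamma\neq0$ (which follows from $p\neq0$ together with the fact that a product of nonzero elements of the field $\KK(k)$ is nonzero) and checking the boundary terms of the telescoping so that precisely $p$ survives in the denominator and $\sigma^r(p)$ in the numerator. The \pisiSE-hypothesis on $\dfield{\KK(k)}{\sigma}$ is used only to guarantee that $\KK$ is the constant field and $\sigma(k)=k+1$, which is exactly what makes $\sigma^j(p)=p(k+j)$ hold and thereby reduces the whole statement to this one-line product construction.
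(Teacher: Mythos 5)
Your proposal is correct and follows essentially the same route as the paper: for $r\geq 0$ you take $\gamma=\prod_{j=0}^{r-1}\sigma^j(p)$ and for $r<0$ your choice $\gamma=\bigl(\prod_{j=r}^{-1}\sigma^j(p)\bigr)^{-1}$ coincides with the paper's $\prod_{i=1}^{-r}\sigma^{-i}(1/p)$, with the same multiplicative telescoping verification. No gaps; the extra remarks on nonvanishing of $\gamma$ and the boundary terms are fine but not substantively different from the paper's argument.
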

\begin{proof}
If $r\geq0$, take $\gamma:=\prod_{i=0}^{r-1} \sigma^{i}(p)\in\KK[k]$.
Then $\frac{\sigma(\gamma)}{\gamma}=\frac{\prod_{i=0}^{r-1}
\sigma^{i+1}(p)}{\prod_{i=0}^{r-1} \sigma^i(p)}
=\frac{\sigma^r(p)}{p}=\frac{q}{p}$. Similarly, if $r<0$, take
$\gamma:=\prod_{i=1}^{-r} \sigma^{-i}(\frac{1}{p})$ and we get that
$\frac{\sigma(\gamma)}{\gamma}=\frac{q}{p}$.
\end{proof}

\begin{example}\label{Exp:Pi1}
Consider, e.g., the  products
$$P_1(k)=\text{\small$\prod_{j=1}^k$}\alpha_1(j)\text{ and }P_2(k)=\text{\small$\prod_{j=1}^k$}\alpha_2(j)$$
with 
$\alpha_1=2(n+1)^2(k+n)$ and $\alpha_2=2^2(n+1)(-k-2-n)k$.
Then among all the factors $2,n+1,k+n,-k-2-n,k$ in $\alpha_1$ and $\alpha_2$, only $k+n$ and $-k-2-n$ are not shift-prime:
$\sigma^2(k+n)=-(-k-n-2)$. By Lemma~\ref{Lemma:MapFactors} we get
$\frac{\sigma(g)}{g}(k+n)=-(-k-n-2)$
with $g=(k+n)(k+n+1)$. Thus we can write
$\alpha_1=2(n+1)^2(k+n)$ and
$\alpha_2=-2^2(n+1)(k+n)k\tfrac{\sigma(g)}{g}$ 
by using $2$, $k$, $k+n$ and the unit $-1$. 
\end{example}

Eventually, all factors in the representations~\eqref{Equ:FullFactor} are treated and what remains in each of the representations is a unit of $R$, i.e., $(-1)^{w_i}$ with $w_i\in\{0,1\}$. Summarizing, we end up at the representation~\eqref{Equ:CompactRep}. In particular, we get the following alternative representation of $P_1(k),\dots,P_n(k)$.
Namely, let $\lambda'_i\geq\lambda_i$ be sufficiently large such that for all $j\geq\lambda'_i$ we have that $g(j)$ has no pole and is not zero.
Then for $1\leq i\leq n$, define $c'_i=\prod_{j=\lambda_i}^{\lambda'_i-1}\alpha_i(j)\in\KK^*$ and $c_i=\frac{c'_i}{g_i(\lambda'_i)}\in\KK^*$. Thus by multiplicative telescoping,

\vspace*{-0.3cm}

\small
\begin{align}\label{Equ:ProdGoodRep}
&P_i(k)=
c'_i\,\tfrac{g_i(k+1)}{g_i(\lambda'_i)}\prod_{j=\lambda'_i}^k\big((-1)^{w_i}\,h_1(j)^{\mu_{i,1}}\dots h_e(j)^{\mu_{i,e}}\big)\\[-0.2cm]
&=c_i\,g_i(k+1)((-1)^k)^{w_i}\big(\prod_{i=\lambda'_i}^kh_1(j)\big)^{\mu_{i,1}}\dots\big(\prod_{i=\lambda'_i}^kh_e(j)\big)^{\mu_{i,e}}.\nonumber
\end{align}
\normalsize

\vspace*{-0.1cm}

\textit{ Step (3).} Finally, we construct a difference ring in which we can express the products in a \piE-extension together with an \rE-extension. Here we rely on the following lemma which follows by iterative application of~\cite[Thm.6.9]{Schneider:05c}.

\begin{lemma}\label{Lemma:GetPiExt}
Let $\dfield{\KK(k)}{\sigma}$ be the \pisiSE-field over a rational function field $\KK=\QQ(y_1,\dots,y_o)$ with $\sigma(k)=k+1$. Let $h_1,\dots,h_e\in\ZZ[y_1,\dots,y_o,k]$ be irreducible and pairwise shift-prime.
Then there is a \piE-extension $\dfield{\KK(k)(t_1)\dots(t_e)}{\sigma}$ of $\dfield{\KK(k)}{\sigma}$ with $\frac{\sigma(t_i)}{t_i}=h_i$.
\end{lemma}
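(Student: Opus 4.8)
The plan is to realize the claimed field as a tower of \piE-extensions, adjoining the $t_i$ one after another, and to certify at each step---by an induction on $i$---that the set of constants does not grow. Set $\FF_0=\KK(k)$ and, for $1\leq i\leq e$, let $\FF_i=\FF_{i-1}(t_i)$ with $t_i$ transcendental over $\FF_{i-1}$ and $\sigma$ extended by $\sigma(t_i)=h_i\,t_i$; such an extension exists and is unique by the basic extension lemma above (the case $\beta=0$). Assuming inductively that $\dfield{\FF_0}{\sigma}\leq\cdots\leq\dfield{\FF_{i-1}}{\sigma}$ are all \piE-extensions with $\const{\FF_{i-1}}{\sigma}=\KK$, it remains to show that $\dfield{\FF_i}{\sigma}$ is a \piE-extension of $\dfield{\FF_{i-1}}{\sigma}$ in the sense of Definition~\ref{Def:PiSi}, i.e.\ that $\const{\FF_i}{\sigma}=\KK$. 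By the \piE-extension criterion underlying \cite[Thm.~6.9]{Schneider:05c}, this holds iff there is no integer $m\neq0$ and no $g\in\FF_{i-1}^*$ with $\sigma(g)/g=h_i^{\,m}$; establishing this homogeneous condition is the crux.

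Verifying this condition is the step I expect to be the main obstacle. I would discharge it by descending to the base field $\KK(k)$ through the already-established \piE-extensions below $\FF_{i-1}$. The structural fact I rely on is the standard one for \piE-extensions: if $\dfield{\EE(s)}{\sigma}$ is a \piE-extension with $\sigma(s)=\beta\,s$ and $u\in\EE(s)^*$ satisfies $\sigma(u)/u\in\EE^*$, then $u=w\,s^{\ell}$ for some $w\in\EE^*$ and $\ell\in\ZZ$, so that $\sigma(u)/u=(\sigma(w)/w)\,\beta^{\ell}$. Applying this with $s=t_{i-1},\dots,t_1$ in turn---each step legitimate because $\dfield{\FF_{j}}{\sigma}$ is a \piE-extension of $\dfield{\FF_{j-1}}{\sigma}$ by the inductive hypothesis---a putative solution of $\sigma(g)/g=h_i^{\,m}$ in $\FF_{i-1}^*$ is peeled down to some $\hat g\in\KK(k)^*$ together with integers $m_1,\dots,m_{i-1}$ satisfying $\sigma(\hat g)/\hat g=h_i^{\,m}\,h_1^{\,m_1}\cdots h_{i-1}^{\,m_{i-1}}$.

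It then remains to rule out such an identity in $\KK(k)^*$, which I would do by a valuation argument over the irreducible factors of $\KK[k]$. Writing $\hat g=c\prod_p p^{a_p}$ with $c\in\KK^*$ and almost all $a_p=0$, one has $\sigma(\hat g)/\hat g=\prod_p(\sigma(p)/p)^{a_p}$, so the occurring irreducibles are organized into shift-equivalence classes $\{\sigma^r(p)\}_{r\in\ZZ}$. Since $h_1,\dots,h_i$ are irreducible and \emph{pairwise shift-prime}, their classes are disjoint and the factor belonging to each class can be matched independently. Within the class of $h_l$ the exponents telescope: the contribution to $\sigma(\hat g)/\hat g$ is $\prod_r \sigma^r(h_l)^{a_{r-1}-a_r}$ (indexing $a_r$ along the class), which must match the prescribed exponent of $\sigma^0(h_l)$ on the right and vanish at every other shift; by finiteness of the support the exponents are then constant for $r\geq0$ and for $r\leq-1$, hence identically $0$, so the matched power vanishes. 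For a class on which $\sigma$ acts trivially (an irreducible free of $k$) the contribution is simply $1$ and the same conclusion is immediate. Applied to the class of $h_i$ this forces $m=0$, contradicting $m\neq0$. Hence each adjunction is a genuine \piE-extension, the constants remain equal to $\KK$ along the whole tower, and $\dfield{\KK(k)(t_1)\cdots(t_e)}{\sigma}$ is the desired nested \piE-extension.
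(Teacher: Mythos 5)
Your proof is correct, but it is genuinely more self-contained than what the paper offers: the paper never argues this lemma, it simply states that the result ``follows by iterative application of~\cite[Thm.~6.9]{Schneider:05c}'', outsourcing each adjunction step to that citation, whereas you reconstruct the substance of such a step from first principles. Your route rests on three ingredients: (i) Karr's classical criterion~\cite{Karr:81} that $\dfield{\FF(t)}{\sigma}$ with $\sigma(t)=\alpha\,t$ is a \piE-extension iff $\sigma(g)/g=\alpha^m$ has no solution with $g\in\FF^*$ and $m\in\ZZ\setminus\{0\}$; (ii) the standard unit structure of \piE-extensions (if $\sigma(u)/u\in\EE^*$ for $u\in\EE(t)^*$, then $u=w\,t^{\ell}$ with $w\in\EE^*$, $\ell\in\ZZ$), which lets you peel a putative solution down to $\KK(k)^*$ at the cost of extra factors $h_1^{m_1}\cdots h_{i-1}^{m_{i-1}}$; and (iii) the shift-orbit exponent-matching argument in $\KK(k)$, where pairwise shift-primeness forces the telescoping exponents along each orbit to vanish, so $m=0$. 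Item (iii) is exactly where the hypotheses of the lemma (irreducibility and pairwise shift-primeness) do their work, and your argument makes that visible where the paper's citation hides it. Two places deserve an explicit gloss: the exponent bookkeeping should be carried out over \emph{monic} irreducibles of $\KK[k]$, with the leading coefficients of the $k$-dependent $h_l$ pushed into the $\KK^*$-component of the equation; and for the $k$-free classes the conclusion $m_l=0$ is not a statement about valuations of $\KK[k]$ at all (a $k$-free irreducible such as $2$ is a \emph{unit} in $\KK[k]$), but follows from unique factorization in $\ZZ[y_1,\dots,y_o]$ once one notes that pairwise shift-prime $k$-free elements are pairwise non-associate irreducibles there, so $\prod_l h_l^{m_l}=1$ in $\KK^*$ forces all $m_l=0$. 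With these glosses your argument is complete; what the paper's approach buys is brevity and reuse of an established, more general theorem, while yours buys transparency and independence from the external reference.
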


Note that the elements $\sigma^{u_1}(h_1),\dots,\sigma^{u_e}(h_e)$ with $u_i\in\ZZ$ are also irreducible elements from $R$ and are also pairwise shift-prime.
As a consequence,
we can take, e.g., the \piE-extension $\dfield{\KK(k)(t_1)\dots(t_e)}{\sigma}$ of $\dfield{\KK(k)}{\sigma}$ with $\sigma(t_i)=\sigma(h_i)\,t_i$ by Lemma~\ref{Lemma:GetPiExt}. Moreover, since $\dfield{\KK(k)}{\sigma}$ is constant stable, we can invoke Proposition~\ref{Prop:RExt}. Hence
we can construct the \rE-extension $\dfield{\AA}{\sigma}$ of $\dfield{\KK(k)(t_1)\dots(t_e)}{\sigma}$ with $\AA=\KK(k)(t_1)\dots(t_e)[x]$ and $\sigma(x)=-x$. Then by~\eqref{Equ:ProdGoodRep} it is immediate that we can model $P_i(k)$ by
$c_i\,\sigma(g_i)x^{w_i}t_1^{\mu_{i,1}}\dots t_e^{\mu_{i,e}}\in\AA^*$ for all $1\leq i\leq n$.

\begin{example}[Cont. Ex.~\ref{Exp:Pi1}]\label{Exp:Pi2}
We obtain
\begin{align*}
P_1(k)&=
2^k((n+1)^k)^2\Big(\prod_{i=1}^k(i+n)\Big),\\[-0.4cm]
P_2(k)&=
\tfrac{(k+n+2)(k+n+1)}{(n+1)(n+2)}(-1)^k(2^k)^2\Big(\prod_{i=1}^k(i+n)\Big)k!\,.
\end{align*}
By construction the elements $h_1=2$, $h_2=n+1$, $h_3=n+k$, $h_4=k$ in $\ZZ[n,k]$ are shift-prime. Hence we can construct the \piE-extension
$\dfield{\QQ(n)(k)(t_1)(t_2)(t_3)(t_4)}{\sigma}$ of $\dfield{\QQ(n)(k)}{\sigma}$ with $\sigma(t_i)=\sigma(h_i)\,t_i$. Finally, we can construct the \rE-extension $\dfield{\QQ(n)(k)(t_1)(t_2)(t_3)(t_4)[x]}{\sigma}$ of $\dfield{\QQ(n)(k)(t_1)(t_2)(t_3)(t_4)}{\sigma}$ with $\sigma(x)=-x$. Here we can choose
$g_1=t_1\,t_2^2\,t_3\,t_4$ and 
$g_2=\tfrac{(k+n+2)(k+n+1)}{(n+1)(n+2)}\,x\,t_1^2\,t_3\,t_4$
to model $P_1(k)$ and $P_2(k)$ accordingly.
\end{example}

Summarizing, we end up at the following theorem; compare~\cite[Cor.~6.12]{Schneider:05c}.

\begin{theorem}\label{Thm:RepresentHyp}
Let $\dfield{\KK(k)}{\sigma}$ be a \pisiSE-field over the rational function field $\KK=\QQ(y_1,\dots,y_o)$ with $\sigma(k)=k+1$. Let $\alpha_1,\dots,\alpha_n\in\KK(k)^*$. Then there is an algorithm that solves Problem RHP. More precisely, the following holds.\\
(1) It computes a \piE-extension $\dfield{\HH}{\sigma}$ of $\dfield{\KK(k)}{\sigma}$ with $\HH=\KK(k)(t_1)\dots(t_e)$ where the
$\sigma(t_i)/t_i\in\ZZ[y_1,\dots,y_o,k]$ with $1\leq i\leq e$ are irreducible and pairwise shift-prime.\\
(2) It computes an \rE-extension $\dfield{\AA}{\sigma}$ of $\dfield{\HH}{\sigma}$ with $\AA=\HH[x]$ and $\sigma(x)=-x$.\\
(3) For any $1\leq r\leq n$, it computes a $g\in\AA^*$ with $\sigma(g)=\alpha_i\,g$;
here $g$ has the form $q\,x^w\,t_1^{\mu_{1}}\dots t_e^{\mu_{e}}$ with $w\in\{0,1\}$,$\mu_{1},\dots,\mu_e\in\ZZ$ and $q\in\KK(k)^*$.
\end{theorem}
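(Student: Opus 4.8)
The plan is to prove the theorem constructively by carrying out Steps~(1)--(3) developed above and then verifying that the resulting difference ring $\dfield{\AA}{\sigma}$, together with the exhibited elements $g$, meets all three requirements of Problem~RHP. Since the constant-preserving \piE- and \rE-extensions are already handed to us by Lemma~\ref{Lemma:GetPiExt} and Proposition~\ref{Prop:RExt}, the substance of the argument lies in massaging the given $\alpha_i$ into the compact form~\eqref{Equ:CompactRep} and in checking that the claimed monomials satisfy $\sigma(g)=\alpha_i\,g$.

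First I would factor each $\alpha_i$ over $R=\ZZ[y_1,\dots,y_o,k]$ into irreducibles as in~\eqref{Equ:FullFactor}. Collecting all irreducible factors occurring across $\alpha_1,\dots,\alpha_n$, I would test pairwise shift-equivalence: for two factors both depending on $k$ this is decidable by locating the integer roots of the resultant $p(z)=\mathrm{res}_k(a(z),b(k+z))\in\KK[z]$, while the remaining cases are settled by the elementary observations recorded after~\eqref{Equ:FullFactor}. This partitions the factors into shift-equivalence classes; choosing one representative per class gives pairwise shift-prime elements $h_1,\dots,h_e\in R$ that are shared simultaneously by all the $\alpha_i$. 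For every factor $f$ lying in the class of some $h_j$, the computed witness $r$ yields $f=(\pm1)\,\sigma^r(h_j)$, and Lemma~\ref{Lemma:MapFactors} produces a $\gamma\in\KK(k)^*$ with $\sigma^r(h_j)=h_j\,\tfrac{\sigma(\gamma)}{\gamma}$. Gathering all such rational quotients into a single $g_i\in\KK(k)^*$, all sign units into $(-1)^{w_i}$ with $w_i\in\{0,1\}$ (using $(-1)^2=1$), and the surviving powers of the representatives into $h_1^{\mu_{i,1}}\cdots h_e^{\mu_{i,e}}$, I obtain exactly~\eqref{Equ:CompactRep} for each $i$.

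Next I would invoke Lemma~\ref{Lemma:GetPiExt} with the irreducible, pairwise shift-prime $h_1,\dots,h_e$ to build the \piE-extension $\dfield{\HH}{\sigma}$, $\HH=\KK(k)(t_1)\dots(t_e)$, with $\sigma(t_j)/t_j=h_j$; as a \piE-extension preserves constants, $\const{\HH}{\sigma}=\KK$. Being a \pisiSE-field, $\dfield{\HH}{\sigma}$ is constant-stable, so Proposition~\ref{Prop:RExt} applies to $\alpha=-1$ (a primitive $2$nd root of unity), yielding the \rE-extension $\dfield{\AA}{\sigma}$ with $\AA=\HH[x]$, $\sigma(x)=-x$ and $\const{\AA}{\sigma}=\KK$, which settles the constant requirement of Problem~RHP as well as parts~(1) and~(2). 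For part~(3), set $g:=g_i\,x^{w_i}\,t_1^{\mu_{i,1}}\cdots t_e^{\mu_{i,e}}$ for each $i$; this is a unit of $\AA$ since $g_i\in\KK(k)^*$, $x$ is invertible with $x^{-1}=x$, and each $t_j$ is invertible in the field $\HH$. Using $\sigma(t_j)=h_j\,t_j$ and $\sigma(x)=-x$, a direct computation gives $\sigma(g)/g=\tfrac{\sigma(g_i)}{g_i}\,(-1)^{w_i}\,h_1^{\mu_{i,1}}\cdots h_e^{\mu_{i,e}}$, which by~\eqref{Equ:CompactRep} equals $\alpha_i$. Hence $\sigma(g)=\alpha_i\,g$, and $g$ has the required shape $q\,x^w\,t_1^{\mu_1}\cdots t_e^{\mu_e}$ with $q=g_i\in\KK(k)^*$.

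I expect the main obstacle to be Step~(2), the compactification: one must organize the irreducible factors of \emph{all} the $\alpha_i$ simultaneously into shift-equivalence classes and reduce every factor to a single common representative, so that the final $h_1,\dots,h_e$ are genuinely pairwise shift-prime and uniform across the $\alpha_i$. Keeping careful track of the accumulated rational factors $\sigma(\gamma)/\gamma$ (which must remain in $\KK(k)^*$) and of the sign units is the bookkeeping heart of the argument. Once~\eqref{Equ:CompactRep} is in hand, the extension construction and the verification $\sigma(g)=\alpha_i\,g$ are routine consequences of Lemma~\ref{Lemma:GetPiExt}, Proposition~\ref{Prop:RExt}, and the defining relations of the adjoined generators.
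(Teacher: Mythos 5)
Your proposal is correct and follows essentially the same route as the paper: factor over $R=\ZZ[y_1,\dots,y_o,k]$, reduce shift-equivalent factors to common representatives via Lemma~\ref{Lemma:MapFactors} to reach~\eqref{Equ:CompactRep}, then invoke Lemma~\ref{Lemma:GetPiExt} and Proposition~\ref{Prop:RExt} (justified by constant-stability of the \pisiSE-field) to build the extensions and read off the monomial solutions. The only immaterial deviation is the normalization $\sigma(t_j)/t_j=h_j$ where the paper takes $\sigma(t_j)/t_j=\sigma(h_j)$; both choices are covered by Lemma~\ref{Lemma:GetPiExt} and yield part~(3), your choice making the verification $\sigma(g)=\alpha_i\,g$ slightly more direct.
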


We conclude this section by two remarks.\\ 
(1) The construction can be performed by taking step by step the different products $P_i(k)$. Hence, whenever a new product arises within a summation problem, we can continue this construction and will always succeed in adjoining new \piE-extensions such that also this product can be represented in the possibly enlarged difference ring.\\
(2) This flexibility is guaranteed by taking only multiplicands $\sigma(t_i)/t_i$ that are irreducible polynomials from $R$. But, it is often desirable to merge different \piE-generators to one element. Here the following result is useful; the proof is left to the reader and follows easily by using~\cite[Thm.~9.1]{Schneider:10c}.

\begin{proposition}\label{Prop:MergeProds}
Let $\dfield{\FF(t_1)\dots(t_e)}{\sigma}$ be a \piE-extension of $\dfield{\FF}{\sigma}$  with $h_i:=\frac{\sigma(t_i)}{t_i}\in\FF^*$. Then there is a \piE-extension $\dfield{\FF(t_1)\dots(t_{e-1})(t)}{\sigma}$ of $\dfield{\FF(t_1)\dots(t_{e-1})}{\sigma}$ with $\frac{\sigma(t)}{t}=h_1^{z_1}\dots h_e^{z_e}$ with $z_1\dots,z_e\in\ZZ$ where $z_e\neq0$. 
\end{proposition}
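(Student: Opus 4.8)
The plan is to realize the would-be new generator as a monomial inside the \emph{given} tower $\dfield{\FF(t_1)\dots(t_e)}{\sigma}$ and then to inherit constant-preservation from the full tower. Throughout I would set $\EE:=\FF(t_1)\dots(t_{e-1})$. Since each $h_i=\sigma(t_i)/t_i$ lies in $\FF^*\subseteq\EE^*$, the field $\EE$ is a $\sigma$-stable difference subfield, $\dfield{\FF(t_1)\dots(t_e)}{\sigma}=\dfield{\EE(t_e)}{\sigma}$ is a \piE-extension of $\dfield{\EE}{\sigma}$ with $\sigma(t_e)/t_e=h_e$, and, because the whole tower is by hypothesis a (nested) \piE-extension of $\dfield{\FF}{\sigma}$, the constants never grow along it; in particular $\const{\EE(t_e)}{\sigma}=\const{\FF}{\sigma}$.

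First I would fix the exponents. Choose $z_1,\dots,z_{e-1}\in\ZZ$ freely (these record which generators one wishes to absorb into the merged object) and any $z_e\in\ZZ\setminus\{0\}$, and set $h:=h_1^{z_1}\dots h_e^{z_e}\in\FF^*$. By the basic lemma of Section~\ref{Sec:DF} there is a candidate extension $\dfield{\EE(t)}{\sigma}$ with $t$ transcendental over $\EE$ and $\sigma(t)=h\,t$; the only thing left to verify is that its constant field equals $\const{\EE}{\sigma}$, which is exactly the defining requirement of a \piE-extension.

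The central step is to exhibit an isomorphic copy of $\dfield{\EE(t)}{\sigma}$ already inside $\dfield{\EE(t_e)}{\sigma}$. I would put $u:=t_1^{z_1}\dots t_e^{z_e}\in\EE(t_e)^*$; a one-line computation gives $\sigma(u)=h_1^{z_1}\dots h_e^{z_e}\,u=h\,u$. Writing $v:=t_1^{z_1}\dots t_{e-1}^{z_{e-1}}\in\EE^*$ we have $u=v\,t_e^{z_e}$, hence $\EE(u)=\EE(t_e^{z_e})$; as $z_e\neq0$ and $t_e$ is transcendental over $\EE$, the element $t_e^{z_e}$, and therefore $u$, is transcendental over $\EE$, so $\EE(u)$ is a $\sigma$-stable rational function field over $\EE$. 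The $\EE$-algebra map sending $t\mapsto u$ is then an isomorphism $\EE(t)\to\EE(u)$, and since $\sigma(t)=h\,t\mapsto h\,u=\sigma(u)$ with $h\in\EE$ fixed, it is in fact an isomorphism of difference fields $\dfield{\EE(t)}{\sigma}\cong\dfield{\EE(u)}{\sigma}$.

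It then remains only to compute the constants on the concrete side, which I would do with the sandwich
\[
\const{\EE}{\sigma}\subseteq\const{\EE(u)}{\sigma}\subseteq\const{\EE(t_e)}{\sigma}=\const{\FF}{\sigma}\subseteq\const{\EE}{\sigma},
\]
where the outer inclusions are immediate (the first trivial, the last because $\const{\FF}{\sigma}\subseteq\FF\subseteq\EE$), the middle inclusion holds because a constant of the subfield $\EE(u)$ is a constant of $\EE(t_e)$, and the equality is the constant-preservation of the full tower noted above. This forces $\const{\EE(u)}{\sigma}=\const{\EE}{\sigma}$, and transporting along the isomorphism yields $\const{\EE(t)}{\sigma}=\const{\EE}{\sigma}$, so $\dfield{\EE(t)}{\sigma}$ is the desired \piE-extension of $\dfield{\FF(t_1)\dots(t_{e-1})}{\sigma}$ with $\sigma(t)/t=h_1^{z_1}\dots h_e^{z_e}$ and $z_e\neq0$. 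The hard part is nothing computational but rather pinpointing where the hypothesis $z_e\neq0$ is indispensable: it is precisely the nonvanishing of $z_e$ that makes $u$ transcendental over $\EE$, so that $\EE(u)$ is a genuine rational function field isomorphic to the abstract $\EE(t)$ instead of an algebraic (or constant-enlarging) extension; were $z_e=0$ the element $t$ would differ from an element of $\EE$ by a new constant and the statement would fail. The cited result~\cite[Thm.~9.1]{Schneider:10c} packages exactly this transcendence-plus-constants analysis for monomials in a \piE-extension, which is why the author can leave the argument to the reader.
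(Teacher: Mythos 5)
Your proof is correct, and it takes a genuinely different route from the paper's, which in fact contains no argument at all: the author delegates the proof to~\cite[Thm.~9.1]{Schneider:10c}. That theorem is a multiplicative-independence criterion for \piE-extensions (a tower of monomials with multiplicands $h_1,\dots,h_e\in\FF^*$ forms a \piE-extension if and only if no power product $h_1^{n_1}\dots h_e^{n_e}$ with $(n_1,\dots,n_e)\in\ZZ^e\setminus\{0\}$ equals $\sigma(g)/g$ for some $g\in\FF^*$), and with it the proposition is two lines of exponent bookkeeping: a nontrivial relation for $(h_1,\dots,h_{e-1},h)$ with $h=h_1^{z_1}\dots h_e^{z_e}$ induces one for $(h_1,\dots,h_e)$ whose $h_e$-exponent $m\,z_e$ is nonzero precisely because $z_e\neq0$, contradicting the hypothesis; the converse direction of the criterion then delivers the new extension. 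You instead argue from scratch: you realize the merged generator as the monomial $u=t_1^{z_1}\dots t_e^{z_e}$ inside the given tower, obtain its transcendence over $\EE=\FF(t_1)\dots(t_{e-1})$ from $z_e\neq0$ via $\EE(u)=\EE(t_e^{z_e})$, settle the constants by the sandwich $\const{\EE}{\sigma}\subseteq\const{\EE(u)}{\sigma}\subseteq\const{\EE(t_e)}{\sigma}=\const{\FF}{\sigma}\subseteq\const{\EE}{\sigma}$, and transport everything along the difference isomorphism $t\mapsto u$; all steps check out, including the $\sigma$-stability of $\EE$ and $\EE(u)$, and you also read the quantifiers in the intended universal way (arbitrary prescribed $z_i$ with $z_e\neq0$), which is what the merging in Example~\ref{Exp:Sum1Part1} actually requires and without which the statement would be vacuous (take $t=t_e$). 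Your route buys self-containedness --- only the definition of nested \piE-extensions and the basic extension lemma are used --- and it makes transparent both where $z_e\neq0$ enters and that the merged difference field embeds into the original tower; the paper's route buys brevity and leans on a criterion one needs anyway in settings your embedding cannot reach, e.g.\ deciding whether a proposed product extension is a \piE-extension when no ambient tower containing it is available. One small caveat: your closing sentence mischaracterizes~\cite[Thm.~9.1]{Schneider:10c} as ``packaging the transcendence-plus-constants analysis''; it is rather the independence criterion described above, but this remark is decoration and does not affect the validity of your proof.
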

 

\begin{example}\label{Exp:Sum1Part1}
We construct a difference ring to represent the products arising on the left hand side of~\eqref{Equ:FirstSum}. We start with the \pisiSE-field $\dfield{\KK(k)}{\sigma}$ over $\KK=\QQ(n)$ with $\sigma(k)=k+1$.\\
(1) Take $\binom{n}{k}=\prod_{i=1}^k\frac{n-i+1}{i}$. Since $n-k-1,k\in\ZZ[n,k]$ are irreducible and shift-prime, we can construct, e.g., the \piE-extension $\dfield{\QQ(n)(k)(t_1)(t_2)}{\sigma}$ of $\dfield{\QQ(n)(k)}{\sigma}$ with $\sigma(t_1)=(k+1)\,t_1$ and $\sigma(t_2)=(n-k)t_2$. In this way $\binom{n}{k}$ can be rephrased by $t_2/t_1$. Note that this construction can be simplified with Proposition~\ref{Prop:MergeProds}. Namely, by merging $b:=t_2/t_1$ we get the \piE-extension $\dfield{\QQ(n)(k)(b)}{\sigma}$ of $\dfield{\QQ(n)(k)}{\sigma}$ with $\sigma(b)=\frac{n-k}{k+1}\,b$ and we represent $\binom{n}{k}$ by $b$.\\
(2) We represent $(-1)^k$ with $x$ in the \rE-extension $\dfield{\QQ(n)(k)(b)[x]}{\sigma}$ of $\dfield{\QQ(n)(k)(b)}{\sigma}$ with $\sigma(x)=-x$.
\end{example}

\section{\sigmaSE-extensions and the representation of indefinite nested sums (over products)}\label{Sec:SigmaE}

So far we introduced \sigmaSE-extensions in the setting of difference fields; see Definition~\ref{Def:PiSi}. In the following we introduce a slightly less general definition in the setting of difference rings that excludes the exotic case that sums might pop up in denominators. 
On the other side, we gain more flexibility, since in the ring setting we can handle in addition indefinite nested sums in which objects like $(-1)^k$ occur.
Similar to the field setting, we utilize the following lemma. 

\begin{lemma}\label{Lemma:UniqueSumExt}
Let $\dfield{\AA}{\sigma}$ be a difference ring with $\beta\in\AA$. Let $t$ be transcendental over $\AA$, i.e., $\AA[t]$ is a polynomial ring. Then there is a unique difference ring extension $\dfield{\AA[t]}{\sigma}$ of $\dfield{\AA}{\sigma}$ with $\sigma(t)=t+\beta$.
\end{lemma}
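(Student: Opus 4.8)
The plan is to build the extended map $\sigma'$ via the universal property of the polynomial ring $\AA[t]$, to observe that this property simultaneously delivers existence and uniqueness at the level of ring homomorphisms, and finally to upgrade $\sigma'$ to an automorphism by exhibiting an explicit inverse, which is the only step where the difference-ring hypothesis (that $\sigma$ is invertible) is really used.

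First I would define $\sigma'$ by $\sigma'\bigl(\sum_i a_i\,t^i\bigr)=\sum_i\sigma(a_i)(t+\beta)^i$. This is precisely the ring homomorphism $\AA[t]\to\AA[t]$ supplied by the universal property of the polynomial ring, applied to the ring homomorphism $\AA\to\AA[t]$ sending $a\mapsto\sigma(a)$ (i.e.\ $\sigma$ followed by the inclusion $\AA\hookrightarrow\AA[t]$) together with the choice of image $t\mapsto t+\beta\in\AA[t]$. By construction $\sigma'|_{\AA}=\sigma$ and $\sigma'(t)=t+\beta$. Since $\AA[t]$ is generated as a ring by $\AA$ and $t$, any ring homomorphism that restricts to $\sigma$ on $\AA$ and sends $t$ to $t+\beta$ must coincide with $\sigma'$; this already settles uniqueness among ring endomorphisms.

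The one point requiring care is that a difference ring demands $\sigma'$ be an \emph{automorphism}, not merely an endomorphism, and here I would use crucially that $\sigma$ is invertible on $\AA$. Define a second ring homomorphism $\tau'\colon\AA[t]\to\AA[t]$, again via the universal property, by $\tau'|_{\AA}=\sigma^{-1}$ and $\tau'(t)=t-\sigma^{-1}(\beta)$. I would then check that $\tau'$ is a two-sided inverse of $\sigma'$ by verifying the composites on the generating set. On $\AA$ one has $\tau'(\sigma'(a))=\sigma^{-1}(\sigma(a))=a$ and symmetrically $\sigma'(\tau'(a))=a$; on the generator $t$ one computes $\tau'(\sigma'(t))=\tau'(t+\beta)=\bigl(t-\sigma^{-1}(\beta)\bigr)+\sigma^{-1}(\beta)=t$ and $\sigma'(\tau'(t))=\sigma'\bigl(t-\sigma^{-1}(\beta)\bigr)=(t+\beta)-\beta=t$. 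Since $\tau'\circ\sigma'$ and $\sigma'\circ\tau'$ are ring homomorphisms fixing every element of $\AA\cup\{t\}$, both equal the identity, so $\sigma'$ is bijective.

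Consequently $\dfield{\AA[t]}{\sigma'}$ is a difference ring with $\sigma'|_{\AA}=\sigma$, hence a difference ring extension of $\dfield{\AA}{\sigma}$ realizing $\sigma(t)=t+\beta$, and it is unique by the homomorphism uniqueness noted above. I do not expect a genuine obstacle: transcendence of $t$ enters only to guarantee that $\AA[t]$ is a free polynomial ring so that the universal property applies, and the sole substantive step is the inversion argument, which rests entirely on $\sigma$ already being an automorphism of $\AA$. (Note the contrast with the field lemma at the start of the excerpt, where $\sigma'(t)=\alpha\,t+\beta$ with $\alpha\in\FF^*$; in the present summation case $\alpha=1$, so no denominators are introduced and one remains inside the polynomial ring.)
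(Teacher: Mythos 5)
Your proof is correct: the universal property of the polynomial ring gives existence and uniqueness of $\sigma'$ as a ring endomorphism extending $\sigma$ with $\sigma'(t)=t+\beta$, and your explicit inverse $\tau'$ (determined by $\sigma^{-1}$ and $t\mapsto t-\sigma^{-1}(\beta)$) correctly upgrades it to an automorphism. The paper states this lemma without proof, treating it as a basic fact, and your argument is exactly the standard construction it implicitly relies on (the same one underlying the field version, Lemma~1), so there is nothing to add or repair.
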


A (polynomial) \sigmaSE-extension is then such a difference ring extension where the constants remain unchanged.

\begin{definition}
A difference ring extension $\dfield{\AA[t]}{\sigma}$ of $\dfield{\AA}{\sigma}$ is called a (polynomial) \sigmaSE-extension if $t$ is transcendental over $\AA$, $\sigma(t)-t\in\AA$ and $\const{\AA[t]}{\sigma}=\const{\AA}{\sigma}$. A tower of such \sigmaSE-extensions is called (nested) \sigmaSE-extension.
\end{definition}

Theorem~\ref{Thm:SigmaExt} stated below is the key tool to represent indefinite nested sums algorithmically within a tower of \sigmaSE-extensions. For the corresponding field version we refer to~\cite{Karr:81}. We remark that Theorem~\ref{Thm:SigmaExt} has been elaborated already in~\cite{Schneider:14b} in a more general setting. But, in this specialized form the following simplified proof is possible.

\noindent By convention, we set $\deg(0):=-\infty$. We start with

\begin{lemma}\label{LemmaB}
Let $\dfield{\AA[t]}{\sigma}$ be a difference ring extension of $\dfield{\AA}{\sigma}$ with $t$ being transcendental over $\AA$, $\sigma(t)=t+\beta$ for some $\beta\in\AA$, and $\KK=\const{\AA}{\sigma}$ being a field.  If there is a
$g\in\AA[t]$ with $\deg(g)\geq1$ and
$\deg(\sigma(g)-g)<\deg(g)-1$,
then there is a $\gamma\in\AA$ with
$\sigma(\gamma)-\gamma=\beta$.
\end{lemma}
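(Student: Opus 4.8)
The plan is to write $g=\sum_{i=0}^{N}g_i\,t^i$ with $g_i\in\AA$, $g_N\neq0$, where $N=\deg(g)\geq1$, and to track the leading coefficients of $\sigma(g)-g$ as polynomials in $t$. The point of the hypothesis $\deg(\sigma(g)-g)<\deg(g)-1=N-1$ is that the coefficients of $t^N$ \emph{and} of $t^{N-1}$ in $\sigma(g)-g$ must both vanish. Because $\sigma(t)=t+\beta$ and hence $\sigma(t^i)=(t+\beta)^i$, the automorphism applied to $g$ produces, via the binomial theorem, a polynomial in $t$ whose top two coefficients I can compute explicitly. The idea is that forcing these two coefficients to zero yields one equation pinning down $g_N$ as a constant and a second equation which, after the right manipulation, exhibits the desired $\gamma$.

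\textbf{Extracting the two leading equations.}
First I would compute
\[
\sigma(g)=\sum_{i=0}^{N}\sigma(g_i)\,(t+\beta)^i
=\sum_{i=0}^{N}\sigma(g_i)\sum_{j=0}^{i}\binom{i}{j}\beta^{\,i-j}t^{j}.
\]
The coefficient of $t^N$ in $\sigma(g)-g$ is $\sigma(g_N)-g_N$, and vanishing of this (forced since $N>\deg(\sigma(g)-g)$) gives $\sigma(g_N)=g_N$, so $g_N\in\KK=\const{\AA}{\sigma}$. The coefficient of $t^{N-1}$ collects the $j=N-1$ term from $i=N-1$ and the $j=N-1$ term from $i=N$, namely
\[
\big(\sigma(g_{N-1})-g_{N-1}\big)+N\,\beta\,\sigma(g_N).
\]
Since $\deg(\sigma(g)-g)<N-1$ this coefficient is also zero, so
\[
\sigma(g_{N-1})-g_{N-1}=-N\,\beta\,\sigma(g_N)=-N\,g_N\,\beta,
\]
using $\sigma(g_N)=g_N$ and that $g_N$ is a constant.

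\textbf{Producing $\gamma$.}
Now $g_N\in\KK^*$ is a nonzero constant (recall $g_N\neq0$ and $\KK$ is a field), and $N\neq0$, so $N\,g_N$ is an invertible element of $\KK$, hence $\sigma$-fixed. Setting
\[
\gamma:=-\frac{1}{N\,g_N}\,g_{N-1}\in\AA,
\]
I get $\sigma(\gamma)-\gamma=-\frac{1}{N\,g_N}\big(\sigma(g_{N-1})-g_{N-1}\big)=-\frac{1}{N\,g_N}\cdot(-N\,g_N\,\beta)=\beta$, which is exactly the claimed telescoping equation.

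\textbf{Main obstacle.}
The only genuinely delicate point is the bookkeeping of the binomial expansion: I must be sure the $t^{N-1}$ coefficient receives contributions \emph{only} from the $i=N$ and $i=N-1$ terms (all higher $i$ are absent, and lower $i$ contribute to strictly lower powers), and that the cross term from $i=N$ is $N\beta\,\sigma(g_N)$ rather than something with a stray factor. Everything else is routine. I also want to double-check that $\deg(g)\geq1$ is used precisely to guarantee $g_{N-1}$ exists (i.e. $N-1\geq0$) and that $N$ is invertible in $\QQ\subseteq\KK$; the hypothesis $\KK$ a field is what lets me divide by $N\,g_N$.
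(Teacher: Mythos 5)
Your proof is correct and follows essentially the same route as the paper's: compare the coefficients of $t^{N}$ and $t^{N-1}$ in $\sigma(g)-g$, deduce $g_N\in\KK^*$ from the first, and divide the second by $-N\,g_N$ to obtain $\gamma=\frac{-g_{N-1}}{N\,g_N}$. The bookkeeping you flag as the main obstacle is exactly what the paper does via $(t+\beta)^i=\sum_{j=0}^i\binom{i}{j}t^{i-j}\beta^j$, and your use of $\QQ\subseteq\KK$ to invert $N\,g_N$ matches the paper's assumption that all rings contain $\QQ$.
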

\begin{proof}
Let $g=\sum_{i=0}^ng_it^i\in\AA[t]$ with $\deg(g)=n\geq1$ as stated in the lemma, and define $f:=\sigma(g)-g\in\AA[t]$. By $\deg(\sigma(g)-g)<\deg(g)-1$ we get $f=\sum_{i=0}^{n-2}f_i\,t^i$.
Comparing the $n$th and $(n-1)$th coefficient in
$\sum_{i=0}^{n-2}f_i\,t^i=f=\sigma(g)-g=\sum_{i=0}^n\sigma(g_i)(t+\beta)^i-\sum_{i=0}^ng_it^i$
and using $(t+\beta)^i=\sum_{j=0}^i\binom{i}{j}t^{i-j}\beta^j$ for
$0\leq i\leq n$ yield
\begin{align*}
\sigma(g_n)-g_n&=0,&
\sigma(g_{n-1})+\sigma(g_n)\tbinom{n}{1}\beta-g_{n-1}&=0.
\end{align*}
Hence $g_n\in\KK^*$. 
Multiplying the second equation with $\frac1{-n\,g_n}\in\KK^*$ yields
$\sigma(\gamma)-\gamma=\beta$ with $\gamma:=\frac{-g_{n-1}}{n\,g_n}\in\AA$.
\end{proof}

\noindent Now we are in the position to get the following

\begin{theorem}\label{Thm:SigmaExt}
Let $\dfield{\AA[t]}{\sigma}$ be a difference ring extension of $\dfield{\AA}{\sigma}$ with $t$ being transcendental over $\AA$, $\sigma(t)=t+\beta$ for some $\beta\in\AA$, and $\const{\AA}{\sigma}$ being a field. Then this is a \sigmaSE-extension iff there is no $g\in\AA$ with $\sigma(g)=g+\beta$.
\end{theorem}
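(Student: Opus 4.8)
The plan is to prove the biconditional as two separate implications, reformulated as follows: (a) if there exists $g\in\AA$ with $\sigma(g)=g+\beta$, then the extension is \emph{not} a \sigmaSE-extension; and (b) if no such $g$ exists, then $\const{\AA[t]}{\sigma}=\const{\AA}{\sigma}$. Implication (a) is the contrapositive of the ``only if'' direction and is elementary; implication (b) is the ``if'' direction, and this is where Lemma~\ref{LemmaB} does the real work. In both directions the transcendence of $t$ over $\AA$ is the structural fact that lets degrees and coefficient comparisons be well defined.

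For (a), I would exhibit an explicit new constant. Assume $g\in\AA$ satisfies $\sigma(g)=g+\beta$ and consider $t-g\in\AA[t]$. Then
\[
\sigma(t-g)=\sigma(t)-\sigma(g)=(t+\beta)-(g+\beta)=t-g,
\]
so $t-g\in\const{\AA[t]}{\sigma}$. Since $t$ is transcendental over $\AA$, the element $t-g$ has degree $1$ in $t$ and hence lies in $\AA[t]\setminus\AA$. Thus $\const{\AA[t]}{\sigma}\supsetneq\const{\AA}{\sigma}$, so the extension fails to be a \sigmaSE-extension.

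For (b), I would argue by contradiction. The inclusion $\const{\AA}{\sigma}\subseteq\const{\AA[t]}{\sigma}$ is trivial, so it suffices to rule out a constant in $\AA[t]\setminus\AA$. Suppose such a $c$ existed, with $n:=\deg(c)\geq1$ and $\sigma(c)=c$. Then $\sigma(c)-c=0$, so under the convention $\deg(0)=-\infty$ we have $\deg(\sigma(c)-c)=-\infty<n-1=\deg(c)-1$. Hence $c$ meets exactly the hypotheses of Lemma~\ref{LemmaB}, which produces a $\gamma\in\AA$ with $\sigma(\gamma)-\gamma=\beta$, i.e.\ $\sigma(\gamma)=\gamma+\beta$, contradicting the assumption of (b). Consequently $\const{\AA[t]}{\sigma}\subseteq\AA$; since any $\sigma$-fixed element of $\AA$ lies in $\const{\AA}{\sigma}$, this forces $\const{\AA[t]}{\sigma}\subseteq\const{\AA}{\sigma}$, and with the trivial reverse inclusion we get equality.

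I do not expect a genuine obstacle here, because the hard analytic content is already encapsulated in Lemma~\ref{LemmaB}. The one observation that makes direction (b) go through cleanly is that a hypothetical \emph{nonconstant} constant $c$ forces $\sigma(c)-c=0$, whose degree $-\infty$ is a fortiori below $\deg(c)-1$; this sidesteps any case analysis and lets the degree-drop lemma apply immediately. The only point requiring a little care is bookkeeping the inclusions at the end of (b) to convert ``no new constants outside $\AA$'' into the equality $\const{\AA[t]}{\sigma}=\const{\AA}{\sigma}$.
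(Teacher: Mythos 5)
Your proposal is correct and follows essentially the same route as the paper's proof: the ``only if'' direction via the explicit new constant $t-g$, and the converse by feeding a hypothetical constant $c\in\AA[t]\setminus\AA$ (with $\sigma(c)-c=0$, hence degree $-\infty<\deg(c)-1$) into Lemma~\ref{LemmaB}. The only cosmetic difference is that you phrase the second direction as a proof by contradiction where the paper states it as a direct contrapositive; the underlying argument is identical.
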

\begin{proof}
Suppose there is a $g\in\AA$ with $\sigma(g)=g+\beta$. Then $\sigma(t-g)=t-g$, i.e., $t-g\in\const{\AA[t]}{\sigma}\setminus\const{\AA}{\sigma}$. Hence $\dfield{\AA[t]}{\sigma}$ is not a \sigmaSE-extension of $\dfield{\AA}{\sigma}$. Conversely, suppose that there is a $g\in\AA[t]\setminus\AA$ with $\sigma(g)=g$, i.e., $\sigma(g)-g=f$ with $f=0$. Then $\deg(g)>0>-\infty=\deg(f)$. Hence we apply Lemma~\ref{LemmaB} and conclude that there is a $\gamma\in\AA$ with $\sigma(\gamma)=\gamma+\beta$.
\end{proof}

\subsection{Representation of indefinite nested sums (over products)}

Theorem~\ref{Thm:SigmaExt} can be used as follows. Suppose that we are given an expression $F(k)$ in terms of indefinite nested sums defined over objects that can be represented in $\dfield{\AA}{\sigma}$. Suppose in addition that we can solve Problem T (or more generally, Problem PT) in $\dfield{\AA}{\sigma}$ or in a tower of \sigmaSE-extensions over $\dfield{\AA}{\sigma}$.
Then we are able to construct a polynomial \sigmaSE-extension in which the expression $F(k)$ can be represented. 

\begin{example}[Cont.~Ex.~\ref{Exp:Sum1Part1}]\label{Exp:Sum1Part2}
Consider the sum on the left hand side of~\eqref{Equ:FirstSum} and let $F(k)$ be its summand. Take the already constructed difference ring $\dfield{\AA}{\sigma}$ with $\AA=\QQ(n)(k)(b)[x]$ where $x$ and $b$ represent $(-1)^k$ and $\binom{n}{k}$, respectively. We proceed as follows to represent $F(k)$ in a difference ring.\\
(1) By solving Problem T in $\dfield{\AA}{\sigma}$ for the summand $b$ we conclude that there is no $g\in\AA$ with $\sigma(g)=g+b$. Hence by Theorem~\ref{Thm:SigmaExt} there is the \sigmaSE-extension $\dfield{\AA[s]}{\sigma}$ of $\dfield{\AA}{\sigma}$ with $\sigma(s)=s+b$ in which $s$ represents $\sum_{i=0}^{k-1}\binom{n}{i}$.\\
(2) Note that
$f=x\,b\,s$ represents $F(k)$. With \SigmaP\ we solve Problem T in $\dfield{\AA[s]}{\sigma}$ and provide the solution 

\vspace*{-0.2cm}

\begin{equation}\label{Equ:Exp1TeleSol}
g=-\frac{s x (1-k+n)}{b (n+2)}
+\frac{(2 k+1) x}{4 (n+2)};
\end{equation}

\vspace*{-0.2cm}

\noindent for details see Example~\ref{Exp:Sum1Part3}.
Hence replacing $x$ with $(-1)^k$ and $b$ with $\binom{n}{k}$ in $g$ gives the solution $G(k)$ of~\eqref{Equ:TeleExpr} which yields~\eqref{Equ:FirstSum}.
In particular, $\sigma(g)+c\in\AA[s]$ with $c=-\frac1{4(n+2)}$ represents the sum $\sum_{l=1}^k F(l)$; this is exactly the sum on the left hand side of~\eqref{Equ:FirstSum} when $k$ takes over the role of $b$.
\end{example}

In Section~\ref{Sec:PTAlg} we will provide the necessary algorithms that enable the user to perform these constructions algorithmically. In particular, these algorithms will be applicable if we specialize $\dfield{\AA}{\sigma}$ as specified  in Theorem~\ref{Thm:RepresentHyp}. As a consequence, we obtain the following important property.

\smallskip

\noindent\fbox{\begin{minipage}{8.3cm}
Our algorithmic machinery enables one to represent any expression in terms of indefinite nested sums over a finite set of hypergeometric expressions as given in~\eqref{Equ:ProdSet} within the following difference ring. It is built by the \pisiSE-field $\dfield{\KK(k)}{\sigma}$ with $\sigma(k)=k+1$, a tower of \piE-extensions plus possibly one \rE-extension (to represent the products in~\eqref{Equ:ProdSet}), and nested polynomial \sigmaSE-extensions (to represent the indefinite nested sums defined over~\eqref{Equ:ProdSet}).
\end{minipage}}
\section{Algorithms for Problem PT}\label{Sec:PTAlg}

We turn to our algorithmic main result.

\begin{theorem}\label{Thm:MainResult}
Let $\dfield{\FF[x]}{\sigma}$ be an \rE-extension of a difference field $\dfield{\FF}{\sigma}$ and let $\dfield{\EE}{\sigma}$ be a nested polynomial  \sigmaSE-extension of $\dfield{\FF[x]}{\sigma}$. Then the following holds.
\begin{enumerate}
 \item One can solve Problem PT for $\dfield{\EE}{\sigma}$ if one can solve Problem FPLDE in $\dfield{\FF}{\sigma}$. 
 \item This is in particular the case, if $\dfield{\FF}{\sigma}$ is a \pisiSE-field over a constant field as specified in Theorem~\ref{Thm:FPLDEinPiSi}. 
\end{enumerate}
 \end{theorem}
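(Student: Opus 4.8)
The plan is to dismantle the tower from the top downward: first reduce Problem~PT over $\dfield{\EE}{\sigma}$ through the nested \sigmaSE-extensions to Problem~PT over the \rE-extension $\dfield{\FF[x]}{\sigma}$, and then reduce the latter to instances of Problem~FPLDE over the base field $\dfield{\FF}{\sigma}$. Writing $\KK=\const{\FF}{\sigma}$ for the common constant field, I would use throughout that each solution space is a finite-dimensional $\KK$-vector space of dimension at most $d+1$, so that every step of combining partial solutions is linear algebra over $\KK$.

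For a single \sigmaSE-step $\dfield{\AA[t]}{\sigma}$ of $\dfield{\AA}{\sigma}$ with $\sigma(t)=t+\beta$ and $\KK=\const{\AA}{\sigma}$, the first task is a degree bound: given $\vect{f}\in\AA[t]^d$ with $D=\max_j\deg_t(f_j)$, any $g$ with $\sigma(g)-g=c_1f_1+\dots+c_df_d$ must satisfy $\deg_t(g)\leq D+1$. Indeed, if $\deg_t(g)=N>D+1$ then $\deg_t(\sigma(g)-g)\leq D<N-1$, so Lemma~\ref{LemmaB} would yield a $\gamma\in\AA$ with $\sigma(\gamma)-\gamma=\beta$, contradicting the \sigmaSE-property via Theorem~\ref{Thm:SigmaExt}. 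With the ansatz $g=\sum_{i=0}^{D+1}g_it^i$ I would expand $\sigma(g)=\sum_i\sigma(g_i)(t+\beta)^i$ and compare coefficients of $t^i$ from the top: the coefficient of $t^N$ gives the Problem~PT equation $\sigma(g_N)-g_N=\sum_jc_jf_{j,N}$ over $\AA$, and each lower coefficient gives $\sigma(g_i)-g_i=\sum_jc_jf_{j,i}-(\text{known terms in }\beta,g_{i+1},\dots,g_N)$, again Problem~PT over $\AA$ once the higher coefficients are fixed. Thus solving Problem~PT in $\dfield{\AA[t]}{\sigma}$ reduces to finitely many instances of Problem~PT in $\dfield{\AA}{\sigma}$; iterating over the finitely many \sigmaSE-generators reduces Problem~PT in $\dfield{\EE}{\sigma}$ to Problem~PT in $\dfield{\FF[x]}{\sigma}$.

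To handle the \rE-extension, with $x^\lambda=1$ and $\sigma(x)=\alpha x$ for a primitive $\lambda$-th root of unity $\alpha\in\KK^*$, I would write $g=\sum_{i=0}^{\lambda-1}g_ix^i$ and $f_j=\sum_{i=0}^{\lambda-1}f_{j,i}x^i$ with $g_i,f_{j,i}\in\FF$. Since $\sigma(g)=\sum_i\sigma(g_i)\alpha^ix^i$, comparing the coefficient of $x^i$ and dividing by the nonzero constant $\alpha^i$ gives $\sigma(g_i)-\alpha^{-i}g_i=\sum_jc_j(\alpha^{-i}f_{j,i})$, i.e.\ $(c_1,\dots,c_d,g_i)\in V(\alpha^{-i},\vect{f}^{(i)},\FF)$ with $\vect{f}^{(i)}=(\alpha^{-i}f_{1,i},\dots,\alpha^{-i}f_{d,i})$; the case $i=0$ is the $a=1$ specialization of Problem~FPLDE. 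Hence a tuple solves Problem~PT over $\FF[x]$ exactly when its constants lift to a solution of all $\lambda$ of these FPLDE equations simultaneously. Using the assumed FPLDE solver I would compute a $\KK$-basis of each $V(\alpha^{-i},\vect{f}^{(i)},\FF)$, project onto the constant component in $\KK^d$, intersect the images to obtain the admissible constant tuples, and lift a basis of the intersection back to full solutions (the only homogeneous freedom sits in the $i=0$ component, as for $i\neq0$ a nonzero $g_i$ with $\sigma(g_i)=\alpha^{-i}g_i$ would force $g_ix^i$ to be a constant outside $\FF$, against the \rE-property). This yields a basis of $V(\vect{f},\FF[x])$ and proves part~(1); part~(2) then follows immediately, since Theorem~\ref{Thm:FPLDEinPiSi} supplies an FPLDE solver whenever $\dfield{\FF}{\sigma}$ is a \pisiSE-field with the stated constant field.

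The hard part is the bookkeeping of the shared constants $c_1,\dots,c_d$ across the coupled subproblems, both the lower-coefficient \sigmaSE-equations and the $\lambda$ FPLDE-equations: a naive independent recursion would lose the constraint that a single tuple $(c_1,\dots,c_d)$ must solve all of them at once. The remedy is to propagate solution spaces as $\KK$-vector spaces rather than individual solutions, re-parametrizing each successive equation by a basis of the solution space already found (so the new unknowns are the coordinates in that basis, not the original $c_j$). This simultaneously enforces consistency of the constants and guarantees termination, since each peeling step strictly decreases the number of generators while the dimension of the carried solution space stays bounded by $d+1$.
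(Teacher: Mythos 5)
Your proposal is correct, and for the \sigmaSE-part of the tower it is essentially the paper's own argument: the degree bound obtained from Lemma~\ref{LemmaB} together with Theorem~\ref{Thm:SigmaExt} is exactly the paper's Lemma~\ref{Lemma:Deg}, and your top-down coefficient comparison with constants re-parametrized by the basis found at each stage is exactly the recursion in Theorem~\ref{Thm:SigmaLift}, iterated over the finitely many \sigmaSE-generators; part~(2) is then, as in the paper, an immediate appeal to Theorem~\ref{Thm:FPLDEinPiSi}.

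Where you genuinely differ is the organization of the \rE-step. The paper runs the \emph{same} recursion once more: since $x^0,\dots,x^{\lambda-1}$ are linearly independent over $\FF$ and $\deg_x$ is bounded by $\lambda-1$, it solves the FPLDE coming from the top power of $x$, re-parametrizes the constants, substitutes, and recurses on the lower powers. You instead exploit that the action on $\FF[x]$ is diagonal, $\sigma(g_ix^i)=\sigma(g_i)\alpha^ix^i$, so the $\lambda$ coefficient equations $V(\alpha^{-i},\vect{f}^{(i)},\FF)$ decouple except through the shared constants: you solve them all in parallel with the assumed FPLDE solver, intersect the projections onto $\KK^d$, and lift, using the \rE-property (constants unchanged) to justify that the homogeneous freedom sits only in the $x^0$-component. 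Both come down to the same linear algebra over $\KK$; your parallel version makes the decoupled (diagonal versus triangular) structure explicit and calls the FPLDE solver only on the original coefficient data, while the paper's sequential version buys uniformity — one combining argument (delegated to~\cite{Schneider:14}) covers the \sigmaSE- and \rE-cases alike. One cosmetic correction: your degree bound should read $\deg_t(g)\leq\max(D,-1)+1$ rather than $\deg_t(g)\leq D+1$, as in Lemma~\ref{Lemma:Deg}; otherwise the degenerate case $f_1=\dots=f_d=0$ (where $g$ may still be a nonzero constant) escapes your ansatz and the basis vector $(0,\dots,0,1)$ would be lost.
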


In order to derive this result, we treat first the case of \sigmaSE-extensions. More precisely, let $\dfield{\AA[t]}{\sigma}$ be a \sigmaSE-extension of a difference ring $\dfield{\AA}{\sigma}$ and let $\vect{f}=(f_1,\dots,f_d)\in\AA[t]^d$. Then we want to derive a basis of $V(\vect{f},\AA[t])$. First, we bound the degree of the polynomial solutions. More precisely, define 
$\AA[t]_b=\{h\in\AA[t]|\deg(h)\leq b\}$ for $b\in\ZZ$. Then the following lemma enables one to determine a $b$ with
\begin{equation}\label{Equ:VDegBound}
V(\vect{f},\AA[t])=V(\vect{f},\AA[t]_b).
\end{equation}

\begin{lemma}\label{Lemma:Deg}
Let $\dfield{\AA[t]}{\sigma}$ be a \sigmaSE-extension of $\dfield{\AA}{\sigma}$ with constant field $\KK$. Let $g,f\in\AA[t]$ with $\sigma(g)-g=f$. Then $\deg(g)\leq\max(\deg(f),-1)+1$.
\end{lemma}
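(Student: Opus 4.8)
The plan is to derive the degree bound directly from Lemma~\ref{LemmaB} together with the characterization of \sigmaSE-extensions in Theorem~\ref{Thm:SigmaExt}; no coefficient computation beyond what already went into Lemma~\ref{LemmaB} should be needed. The guiding observation is that the asserted inequality $\deg(g)\leq\max(\deg(f),-1)+1$ is, after subtracting $1$, the statement $\deg(g)-1\leq\max(\deg(f),-1)$, which in the only interesting regime $\deg(g)\geq1$ amounts to the lower bound $\deg(\sigma(g)-g)\geq\deg(g)-1$. In other words, what has to be ruled out is precisely the degree drop $\deg(\sigma(g)-g)<\deg(g)-1$, and this is exactly the hypothesis of Lemma~\ref{LemmaB}.

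First I would dispose of the trivial case. If $\deg(g)\leq0$, then $\deg(g)-1\leq-1\leq\max(\deg(f),-1)$, so the claimed bound holds irrespective of $f$; in particular this covers $g=0$, where $f=0$ and the right-hand side equals $0$.

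Then assume $\deg(g)\geq1$. Since $\dfield{\AA[t]}{\sigma}$ is a \sigmaSE-extension of $\dfield{\AA}{\sigma}$ and $\const{\AA}{\sigma}=\KK$ is a field, Theorem~\ref{Thm:SigmaExt} guarantees that there is no $\gamma\in\AA$ with $\sigma(\gamma)=\gamma+\beta$. By the contrapositive of Lemma~\ref{LemmaB}, no $g\in\AA[t]$ with $\deg(g)\geq1$ can satisfy $\deg(\sigma(g)-g)<\deg(g)-1$. Hence for our $g$ we must have $\deg(f)=\deg(\sigma(g)-g)\geq\deg(g)-1$, which rearranges to $\deg(g)\leq\deg(f)+1\leq\max(\deg(f),-1)+1$, as desired.

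The only genuinely substantive input is Lemma~\ref{LemmaB}, whose proof already performed the relevant coefficient comparison of the top two powers of $t$ and exhibited the candidate $\gamma=-g_{n-1}/(n\,g_n)$; the present lemma just repackages that computation as a degree statement and feeds it the \sigmaSE-hypothesis through Theorem~\ref{Thm:SigmaExt}. Accordingly I do not expect a real obstacle. The one point to state carefully is that the constant field being a field is exactly what makes $g_n$ invertible in Lemma~\ref{LemmaB} and what licenses the appeal to Theorem~\ref{Thm:SigmaExt}, so that the non-existence of such a $\gamma$ is genuinely available; everything else is bookkeeping with the convention $\deg(0)=-\infty$.
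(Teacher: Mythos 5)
Your proof is correct and follows essentially the same route as the paper's: both reduce the claim to Lemma~\ref{LemmaB} combined with the characterization of \sigmaSE-extensions in Theorem~\ref{Thm:SigmaExt}, the paper arguing by contradiction where you argue by contrapositive. The only cosmetic difference is the case split (the paper separates $f=0$ from $f\neq 0$, you separate $\deg(g)\leq 0$ from $\deg(g)\geq 1$), which changes nothing substantive.
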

\begin{proof}
Suppose that there are $f,g\in\AA[t]$ with $\sigma(g)-g=f$. If $f=0$, then $g\in\const{\AA[t]}{\sigma}=\KK$ and the bound holds. Now suppose that $f\neq0$ and $\deg(g)>\max(\deg(f),-1)+1=\deg(f)+1$.
By Lemma~\ref{LemmaB} there is a $\gamma\in\AA$ with $\sigma(\gamma)-\gamma=\sigma(t)-t$. Hence $\dfield{\AA[t]}{\sigma}$ is not a \sigmaSE-extension of $\dfield{\AA}{\sigma}$ by Theorem~\ref{Thm:SigmaExt}, a contradiction.
\end{proof}

Thus by Lemma~\ref{Lemma:Deg} we can choose
\begin{equation}\label{Equ:DegBPara}
b=\max(\deg(f_1),\dots,\deg(f_d),-1)+1
\end{equation}
such that~\eqref{Equ:VDegBound} holds.
Given this degree bound $b$, we illustrate the underlying strategy to compute a basis of $V(\vect{f},\AA[t]_b)$ by solving several Problems PT in $\dfield{\AA}{\sigma}$.

\begin{example}[Cont.~Ex.~\ref{Exp:Sum1Part2}]\label{Exp:Sum1Part3}
Take our \sigmaSE-extension $\dfield{\AA[s]}{\sigma}$ of $\dfield{\AA}{\sigma}$ with $\AA=\KK(k)(b)[x]$ and constant field $\KK=\QQ(n)$. Solving Problem T for $f=\frac{s\,x}{b}$ is equivalent to compute a basis of $V=V((f),\AA[s])$. By Lemma~\ref{Lemma:Deg} it follows that $V((f),\AA[s])=V((f),\AA[s]_2)$. Let $(c_1,g)\in V$. Thus $g=g_2\,s^2+g_1\,s+g_0$ with $g_2,g_1,g_0\in\AA$ and $c_1\in\KK$. In particular,  
\begin{equation}\label{Equ:Deg2Eq}
\sigma(g_2\,s^2+g_1\,s+g_0)-(g_2\,s^2+g_1\,s+g_0)=c_1\,f.
\end{equation}
By coefficient comparison w.r.t.\ $s^2$ we get that $\sigma(g_2)-g_2=0$, i.e., $c:=g_2\in\KK$. Thus moving $\sigma(g_2\,s^2)-g_2\,s^2=c(b^2
+2 b s)$ in~\eqref{Equ:Deg2Eq} to the right hand side gives
\begin{equation}\label{Equ:Deg1Eq}
\sigma(g_1\,s+g_0)-(g_1\,s+g_0)=
c_1\,f-c(b^2
+2 b s).
\end{equation}
By coefficient comparison w.r.t. $s$ we get the constraint
$\sigma(g_1)-g_1=c_1\frac{x}{b}-c\,2 b.$
Therefore we compute a basis of $V(\tilde{\vect{f}},\AA)$ with 
$\tilde{\vect{f}}=(\frac{x}{b},-2b)$. As will be worked out in Example~\ref{Exp:Sum1Part4}, we get the basis
$\{( 1 , 0 , -\frac{x (1-k+n)}{b (n+2)}),(0,0,1)\}$.
Thus we conclude that $c=0$. Now we plug in the general solution
$g_1=-\frac{x (1-k+n)}{b (n+2)}+d$ with $d\in\KK$ into~\eqref{Equ:Deg1Eq} and get
$$\sigma(g_0)-g_0=c_1\,\tfrac{(-k-1) x}{n+1}-d\,b.$$
Finally, we compute the basis $\{(1 , 0 , \frac{(2 k+1) x}{4 (n+1)}),(0,0,1)\}$
of $V((\frac{(-k-1) x}{n+1},-b),\AA)$. This shows that $d=0$ and we obtain the general solution  $g_0=\frac{(2 k+1) x}{4 (n+1)}+e$ with $e\in\KK$. In total, we get the general solution $c_1\in\KK$ and $g+e$ where $g$ is given in~\eqref{Equ:Exp1TeleSol}. Thus a  basis of $V((f),\AA[s])$ is $\{(1,g),
(0,1)\}$.
\end{example}

This reduction strategy can be summarized with

\begin{theorem}\label{Thm:SigmaLift}
Let $\dfield{\AA[t]}{\sigma}$ be a \sigmaSE-extension of $\dfield{\AA}{\sigma}$ with constant field $\AA$. If one can solve problem PT in $\dfield{\AA}{\sigma}$ then one can solve problem PT in $\dfield{\AA[t]}{\sigma}$.  
\end{theorem}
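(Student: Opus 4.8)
The plan is to follow the degree-reduction strategy illustrated in Example~\ref{Exp:Sum1Part3}, turning the bounded-degree problem over $\AA[t]$ into a finite cascade of Problems~PT over $\AA$. Write $\KK=\const{\AA}{\sigma}$ (a field) and $\sigma(t)=t+\beta$ with $\beta\in\AA$. First I would invoke Lemma~\ref{Lemma:Deg} together with the bound~\eqref{Equ:DegBPara}: for $\vect{f}=(f_1,\dots,f_d)\in\AA[t]^d$ and $b=\max(\deg(f_1),\dots,\deg(f_d),-1)+1$ one has $V(\vect{f},\AA[t])=V(\vect{f},\AA[t]_b)$, so it suffices to compute a basis of the bounded space $V(\vect{f},\AA[t]_b)$. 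The whole proof then rests on a single \emph{peeling} step that I would iterate $b$ times.

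The peeling step is this: given $\vect{h}=(h_1,\dots,h_e)\in\AA[t]^e$ with $\deg(h_j)\le b$, write $h_j=a_j\,t^b+\tilde h_j$ with $a_j\in\AA$ and $\deg(\tilde h_j)\le b-1$, and write a candidate solution as $g=g_b\,t^b+\hat g$ with $g_b\in\AA$ and $\hat g\in\AA[t]_{b-1}$. Since $\sigma(t^b)-t^b=(t+\beta)^b-t^b=:r$ has degree $b-1$, comparing the coefficient of $t^b$ in $\sigma(g)-g=\sum_j c_j h_j$ yields $\sigma(g_b)-g_b=\sum_j c_j\,a_j$, a Problem~PT in $\dfield{\AA}{\sigma}$ for the tuple $(a_1,\dots,a_e)$ and the unknowns $(c_1,\dots,c_e,g_b)$; the remaining lower-degree part reads $\sigma(\hat g)-\hat g=\sum_j c_j\,\tilde h_j-\sigma(g_b)\,r$.

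The key point — and the step I expect to be the main obstacle — is that these two equations are coupled: $g_b$ is not a free constant but an element of $\AA$ constrained by the leading equation, and it re-enters the lower part through the shifted term $\sigma(g_b)\,r$, so the coefficients cannot be solved independently. I would resolve this by first solving the leading Problem~PT in $\AA$ by hypothesis, obtaining a $\KK$-basis of its solution space; parametrizing that space by free constants $\lambda_1,\dots,\lambda_p\in\KK$, each $c_j$ and $g_b$, hence also $\sigma(g_b)$ (as $\sigma$ is $\KK$-linear), become explicit $\KK$-linear combinations of the $\lambda_k$. Substituting these into the lower-degree equation turns it into $\sigma(\hat g)-\hat g=\sum_k\lambda_k\,w_k$ for explicit $w_k\in\AA[t]_{b-1}$, i.e. a genuine Problem~PT over $\AA[t]$ restricted to degree $b-1$. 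Reconstructing $(c_1,\dots,c_e,g)$ from $(\lambda_1,\dots,\lambda_p,\hat g)$ is then a constructive $\KK$-linear isomorphism $V(\vect{h},\AA[t]_b)\cong V(\vect{w},\AA[t]_{b-1})$ with $\vect{w}=(w_1,\dots,w_p)$.

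Finally I would iterate. Starting from $\vect{f}$ at bound $b$, all $a_j=0$, so the first leading equation is simply $\sigma(g_b)=g_b$, forcing $g_b\in\KK$; at later stages the tuple acquires the degree-$(b-1)$ term $-r$ and the general coupled treatment above is genuinely needed. Each peeling step lowers the degree bound by one while keeping everything explicit, and after $b$ steps the bound reaches $0$, where $\AA[t]_0=\AA$ and $V(\cdot,\AA)$ is solved directly by one last Problem~PT in $\dfield{\AA}{\sigma}$. Composing the chain of $\KK$-linear isomorphisms and pulling a basis of the final space back through them yields a basis of $V(\vect{f},\AA[t])$; termination is immediate since the degree strictly decreases, and the procedure invokes the Problem~PT solver for $\dfield{\AA}{\sigma}$ exactly $b+1$ times.
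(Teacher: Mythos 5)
Your proposal is correct and follows essentially the same route as the paper's proof: the degree bound from Lemma~\ref{Lemma:Deg}, coefficient comparison at the highest power of $t$ reducing to Problem~PT in $\dfield{\AA}{\sigma}$, substitution of the parametrized leading solution, and recursion on the degree bound down to the base case in $\AA$. The only real difference is that your explicit chain of constructive $\KK$-linear isomorphisms (handling the coupling through $\sigma(g_b)\,r$) spells out the ``combining step'' that the paper delegates to a citation.
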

\begin{proof}
Let $\vect{f}=(f_1,\dots,f_d)\in\AA[t]^d$. Then take $b$ as given in~\eqref{Equ:VDegBound} and it follows that~\eqref{Equ:DegBPara}.
Hence we have to search for all $g=g_0+\dots+g_b\,t^b\in\AA[t]_{b}$ and $c_1,\dots,c_d\in\KK$ with~\eqref{Equ:PT}.
By induction on $b$ (equivalently by recursion) we can now calculate a basis of $V(\vect{f},\AA[t]_b)$.  Note that the highest possible degree in~\eqref{Equ:PT} is bounded by $b$. Thus if $b=0$, a basis can be calculated by assumption (base case).
Otherwise, by coefficient comparison w.r.t.\ $t^b$ (the highest possible power) we get the constraint
$\sigma(g_b)-g_b=c_1\,\tilde{f}_1+\dots+c_d\,\tilde{f}_d$
where $\tilde{f}_i=\coeff(f_i,b)$. By assumption we can compute a basis $\tilde{B}$ of $V(\vect{\tilde{f}},\AA)$ with $\vect{\tilde{f}}=(\tilde{f}_1,\dots,\tilde{f}_d)\in\AA^d$. Plugging in the solution into~\eqref{Equ:PT} yields Problem PT in $\dfield{\AA[t]}{\sigma}$ in terms of the remaining coefficients $g_0,\dots,g_{b-1}$. Note that the highest possible degree in this parameterized telescoping equation is at most $b-1$. Now compute a basis $B'$ for this solution space by the induction assumption (by recursion). Finally, combining the bases $\tilde{B}$ and $B'$ accordingly produces a basis of $V(\vect{f},\AA[t]_b)$. The technical details of this combining step are equivalent to the ones given in~\cite[Cor.~1]{Schneider:14}. 
\end{proof}

Next, we turn to \rE-extensions.

\begin{theorem}
 Let $\dfield{\FF[x]}{\sigma}$ be an \rE-extension of a difference field $\dfield{\FF}{\sigma}$. Then one can solve Problem PT for $\dfield{\FF[x]}{\sigma}$ if one can solve Problem FPLDE in $\dfield{\FF}{\sigma}$. 
\end{theorem}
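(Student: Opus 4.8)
The plan is to reduce Problem PT in the root of unity ring $\dfield{\FF[x]}{\sigma}$ to $\lambda$ instances of Problem FPLDE in the base field $\dfield{\FF}{\sigma}$, where $\lambda>1$ is the order with $x^\lambda=1$ and $\sigma(x)=\alpha x$ for a primitive $\lambda$-th root of unity $\alpha\in\KK^*$, $\KK:=\const{\FF}{\sigma}$. By the construction of an \rE-extension, $\FF[x]$ is free as an $\FF$-module with basis $1,x,\dots,x^{\lambda-1}$, so every element has a unique expansion in these powers. First I would write the unknown $g\in\FF[x]$ and the given $f_j\in\FF[x]$ as $g=\sum_{i=0}^{\lambda-1}g_i\,x^i$ and $f_j=\sum_{i=0}^{\lambda-1}f_{j,i}\,x^i$ with $g_i,f_{j,i}\in\FF$.

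Since $\sigma(x^i)=\alpha^i x^i$, the equation $\sigma(g)-g=\sum_{j=1}^d c_j f_j$ expands to $\sum_{i=0}^{\lambda-1}(\alpha^i\sigma(g_i)-g_i)x^i=\sum_{i=0}^{\lambda-1}\big(\sum_j c_j f_{j,i}\big)x^i$. Comparing coefficients of $x^i$ (legitimate by the $\FF$-linear independence of the powers of $x$) and dividing the $i$th equation by $\alpha^i\in\KK^*\subseteq\FF^*$ yields, for $0\le i\le\lambda-1$, the system
\[
\sigma(g_i)-\alpha^{-i}g_i=\sum_{j=1}^d c_j\,(\alpha^{-i}f_{j,i}).
\]
Each single equation is exactly an instance of Problem FPLDE in $\dfield{\FF}{\sigma}$, with coefficient $a_i=\alpha^{-i}\in\FF^*$ (for $i=0$ simply $a_0=1$) and right-hand side data $\alpha^{-i}f_{j,i}\in\FF$, both admissible inputs by hypothesis. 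Hence a solution of Problem PT in $\dfield{\FF[x]}{\sigma}$ is precisely a tuple $(c_1,\dots,c_d,g_0,\dots,g_{\lambda-1})\in\KK^d\times\FF^\lambda$ satisfying all $\lambda$ equations simultaneously with one \emph{common} constant vector $(c_1,\dots,c_d)$.

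The only genuine difficulty is this coupling through the shared constants $c_1,\dots,c_d$: the $\lambda$ equations are independent in the unknowns $g_i$ but must be solved by the same $c_j$. I would resolve it by treating the equations one after another and propagating the parameterization of the constants, exactly as in the proof of Theorem~\ref{Thm:SigmaLift}. Concretely, solve the $i=0$ equation first, obtaining a basis of $V(1,(f_{1,0},\dots,f_{d,0}),\FF)$; this expresses the admissible $(c_1,\dots,c_d,g_0)$ as $\KK$-linear combinations of finitely many basis vectors with free parameters $\kappa_1,\kappa_2,\dots\in\KK$. Substituting the resulting linear expressions for the $c_j$ into the $i=1$ equation turns its right-hand side into a combination $\sum_l\kappa_l\,\hat f_l$ with $\hat f_l\in\FF$, so it is again an FPLDE in $\dfield{\FF}{\sigma}$, now in the parameters $\kappa_l$ and with coefficient $\alpha^{-1}$; solving it further constrains the parameters and fixes $g_1$. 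Iterating through $i=2,\dots,\lambda-1$, the number of free parameters is non-increasing, and after the last step each surviving free parameter yields one solution $(c_1,\dots,c_d,\sum_i g_i x^i)$.

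Finally I would collect these solutions into a basis of $V(\vect{f},\FF[x])$, the combining bookkeeping being identical to that referenced in Theorem~\ref{Thm:SigmaLift} via \cite[Cor.~1]{Schneider:14}. Since each of the finitely many FPLDE calls terminates by the hypothesis on $\dfield{\FF}{\sigma}$, the whole procedure is algorithmic, which establishes the claim.
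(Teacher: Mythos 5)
Your proof is correct and follows essentially the same route as the paper: the paper likewise reduces to the coefficients of $g$ and the $f_j$ in the $\FF$-basis $1,x,\dots,x^{\lambda-1}$ (using $V(\vect{f},\FF[x])=V(\vect{f},\FF[x]_{\lambda-1})$ and linear independence over $\FF$) and then invokes the same recursion/basis-combination scheme as in the proof of Theorem~\ref{Thm:SigmaLift}, each coefficient equation being an FPLDE instance in $\dfield{\FF}{\sigma}$. Your explicit identification of the coefficients $a_i=\alpha^{-i}$, the observation that the equations decouple except through the shared constants $c_1,\dots,c_d$, and the parameter-propagation bookkeeping simply spell out what the paper compresses into its two-sentence reference to that earlier proof.
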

\begin{proof}
We have $V(\vect{f},\FF[x])=V(\vect{f},\FF[x]_{\lambda-1})$ where $\lambda$ is the order of $x$.
Since the $x^0,\dots,x^{\lambda-1}$ are linearly independent over $\FF$, the same induction argument (recursion) as in the proof of Theorem~\ref{Thm:SigmaLift} is applicable.
\end{proof}

\begin{example}[Cont.~Ex.~\ref{Exp:Sum1Part3}]\label{Exp:Sum1Part4}
Take our \rE-extension $\dfield{\FF[x]}{\sigma}$ of $\dfield{\FF}{\sigma}$ with $\FF=\KK(k)(b)$. We aim at computing a basis of $V=V(\vect{f},\FF[x])$ with 
$\vect{f}=(\frac{x}{b},-2b)$. Let $(c_1,c_2,g)\in V$. Then $g=g_1\,x+g_0\in\FF[x]$ and $c_1,c_2\in\KK$ with
\begin{equation}\label{Equ:XDeg1Constr}
\sigma(g_1\,x+g_0)-(g_1\,x+g_0)=c_1\,\frac{x}{b}-c\,2b.
\end{equation}
By coefficient comparison w.r.t.\ $x$ we conclude that $\sigma(g_1)+g_1=c_1\,\frac{1}{b}+c_2\,0$. Therefore we compute a basis of $V((1,1),(\frac{1}{b},0),\FF)$ by solving Problem FPLDE (see Theorem~\ref{Thm:FPLDEinPiSi}; here one could also use a variation of the algorithm given in~\cite{Zeilberger:91}): we get the basis
$\{(1,0,-\frac{(1-k+n)}{b(n+2)}),(0,1,0)\}$. Hence the general solution is $g_1=-\frac{(1-k+n)}{b(n+2)}$. Plugging the found $g_1$ into~\eqref{Equ:XDeg1Constr} gives the constraint
$\sigma(g_0)-g_0=c_1\,0-c_2\,2\,b$.
We calculate the basis $\{(1,0,0),(0,0,1)\}$ of $V((0,2b),\AA)$, i.e., we get the general solution $g_0=d$ with $d\in\KK$, $c_1\in\KK$ and $c_2=0$. Combining these solutions produces $g=-\frac{x (1-k+n)}{b (n+2)}+d$ and $c_1\in\KK$. Thus a basis of $V(\vect{f},\FF[x])$ is 
$\{( 1 , 0 , -\frac{x (1-k+n)}{b (n+2)}),(0,0,1)\}$.
\end{example}

Now suppose that we are given a nested polynomial \sigmaSE-extension $\dfield{\EE}{\sigma}$ of a difference ring $\dfield{\AA}{\sigma}$. Hence by iterative application of Theorem~\ref{Thm:SigmaLift} one can solve Problem PT in $\dfield{\EE}{\sigma}$ if one can solve Problem PT in $\dfield{\AA}{\sigma}$. In particular,
setting $\AA=\FF[x]$ together with Theorem~\ref{Thm:FPLDEinPiSi} proves Theorem~\ref{Thm:MainResult}. 

We conclude this article with a more challenging summation problem.
Denote the sum on the left hand side of~\eqref{Equ:SecondSum} by $S(n)$ and its summand by $F(n,k)$.\\ 
(1) By Example~\ref{Exp:Sum1Part1} we get $\dfield{\AA}{\sigma}$ with $\AA=\QQ(n)(k)(b)[x]$ in which we can represent $\binom{n}{k}$ with $b$ and $(-1)^k$ with $x$.\\
(2) Next, we check that there is no $g\in\AA$ with $\sigma(g)-g=\frac{-x}{k+1}$. Thus we cannot express $a(k)=\sum_{i=1}^k\frac{(-1)^i}{i}$ in $\AA$. However, we can adjoin it in form of the \sigmaSE-extension $\dfield{\AA[s]}{\sigma}$ of $\dfield{\AA}{\sigma}$ with $\sigma(s)=s+\frac{-x}{k+1}$.\\
(3) Similarly, we fail to find a $g\in\AA[s]$ with $\sigma(g)=g+b\,s$. Hence we could adjoin $\sum_{l=0}^k F(n,l)$ in form of a \sigmaSE-extension; but this amounts to no simplification.\\ 
Hence we proceed differently by using Zeilberger's creative telescoping paradigm: we set $F_i(k)=F(n+i-1,k)$ for $1\leq i\leq d$ and
search for a solution of~\eqref{Equ:PClosedForm} with $d=1,...$. We skip the case $d=1$, which is equivalent to telescoping, and continue with $d=2$. Note that 
$F_1(k)=\binom{n}{k}a(k)$ and $F_2(k)=F(n+1,k)=\frac{n+1}{n-k+1}\binom{n}{k}a(k)$ can be represented by $f_1=b\,s$ and $f_2=\frac{n+1}{n-k+1}b\,s$, respectively. Hence hunting for all creative telescoping solutions in $\dfield{\AA[s]}{\sigma}$ is equivalent to computing a basis of $V((f_1,f_2),\AA[s])$. In this particular instance \SigmaP\ computes the non-trivial basis $\{(-2 , 1 , \frac{b x}{n+1}
        +\frac{b k s}{-1
        +k
        -n
        }),(0,0,1)\}$.
Therefore we obtain the solution $c_1=-2$, $c_2=1$ and $G(k)= \frac{1}{n+1}(-1)^k\binom{n}{k}
        +\frac{k}{-1
        +k
        -n
        }\binom{n}{k}a(k)$
of~\eqref{Equ:PClosedForm}.
Summing~\eqref{Equ:PClosedForm} over $k$ from $0$ to $n$ and taking care of compensating terms leads to the recurrence
$$S(n+1)-2\,S(n)=-\frac{1}{n+1}.$$
In this instance, one can read off the right hand side of~\eqref{Equ:SecondSum} by the variation of constants method. For recurrence relations of higher-order more advanced algorithms~\cite{Abramov:94,Singer:99,Schneider:05a,ABPS:14} are available  within the summation package~\SigmaP.

\section*{Acknowledgement}
I would like to thank the organizers of SYNASC 2014 for their hospitality and this wonderful event in Timi\c soara. This work was supported by the Austrian Science Fund (FWF) grant SFB F50 (F5009-N15) and the European
Commission through contract PITN-GA-2010-264564 ({LHCPhenoNet}).







\end{document}